\def\BibTeX{{\rm B\kern-.05em{\sc i\kern-.025em b}\kern-.08em
		T\kern-.1667em\lower.7ex\hbox{E}\kern-.125emX}}
\setlist{leftmargin=4.1mm}
\newcommand{\tikzmark}[1]{\tikz[overlay,remember picture] \node (#1) {};}
\newcommand{\DrawBox}[3][]{%
	\tikz[overlay,remember picture]{
		\draw[black,#1]
		($(#2)+(-0.5em,2.0ex)$) rectangle
		($(#3)+(0.75em,-0.75ex)$);}
}
\colorlet{group1}{orange}
\colorlet{group2}{blue}
\colorlet{group3}{red}
\colorlet{group4}{violet}
\theoremstyle{plain}
\newtheorem{theorem}{Theorem}
\newtheorem{lemma}[theorem]{Lemma}
\newtheorem{definition}[theorem]{Definition}
\newtheorem{remark}{Remark}
\newtheorem{corollary}[theorem]{Corollary}
\theoremstyle{definition}
\newcommand{\widebf}[1]{\widetilde{\mathbf{#1}}}
\newcommand\lev[1]{{\color{black}#1}}
\begin{document}
	
	\title{Variable Coded Batch Matrix Multiplication\vspace{-0.5cm}
	}
	
	\author{\IEEEauthorblockN{Lev Tauz and Lara Dolecek}\\
		\IEEEauthorblockA{Department of Electrical and Computer Engineering, \\
			University of California, Los Angeles \\ 
			levtauz@ucla.edu, dolecek@ee.ucla.edu 
			\vspace{-0.5cm}
		}
	}
	
	\maketitle

	\begin{abstract}
	 A majority of coded matrix-matrix computation literature has broadly focused in two directions: matrix partitioning for computing a single computation task and batch processing of multiple distinct computation tasks. While these works provide codes with good straggler resilience and fast decoding for their problem spaces, these codes would not be able to take advantage of the natural redundancy of re-using matrices across batch jobs. In this paper, we introduce the Variable Coded Distributed Batch Matrix Multiplication (VCDBMM) problem which tasks a distributed system to perform batch matrix multiplication where matrices are not necessarily distinct among batch jobs. Inspired in part by Cross-Subspace Alignment codes, we develop Flexible Cross-Subspace Alignments (FCSA) codes that are flexible enough to utilize this redundancy. We provide a full characterization of FCSA codes which allow for a wide variety of system complexities including good straggler resilience and fast decoding. We theoretically demonstrate that, under certain practical conditions, FCSA codes are within a factor of 2 of the optimal solution when it comes to straggler resilience. Furthermore, our simulations demonstrate that our codes can achieve even better optimality gaps in practice, even going as low as 1.7. 
	\end{abstract}
	
	\vspace{-0.1cm}
	\section{Introduction and Motivation}
	\vspace{-0.1cm}
	
	Large scale distributed matrix-matrix multiplication is a fundamental component of modern data analytics and is used to deal with the exponential rise of big data. Yet, the presence of \textit{stragglers}  (i.e., workers that fail or are slow to respond) significantly hampers distributed systems due to the increase in tail latency\cite{dean2013Tail}. To mitigate stragglers, researchers in the field of coded computation inject computational redundancy through the use of error-correcting codes and have developed a plethora of various coded computation strategies for matrix-matrix computations
	\cite{Yu2017PolynomialCA,Dutta2020OnTO,yu2020straggler,Dutta2018AUC,Wang2018CodedSM,Sheth2018AnAO,Jeong2018LocallyRC,Baharav2018StragglerProofingMD,Yu2020EntangledPC,Hasircioglu2020BivariatePC,Malladi2019ACV,Subramaniam2019RandomKC,Das2019RandomCC,jia2021cross,Yu2019LagrangeCC,fahim2021numerically,sun2021coded,ramamoorthy2021numerically,wang2021price,cadambe2021approximate}, for a survey of these exciting results please refer to \cite{RamamoorthyDT20}. 
	Codes for matrix-matrix computations are broadly separated into two problem spaces: i) matrix partitioning for computing a single computation task \cite{Yu2017PolynomialCA,Dutta2020OnTO,yu2020straggler}, and ii) batch processing of multiple distinct computation tasks \cite{jia2021cross,Yu2019LagrangeCC,Yu2020EntangledPC}. We can summarize one example of problem space (i) as using coding to determine the product $\mathbf{A}\mathbf{B}$ by coding across the row partition $\mathbf{A}^T = (\mathbf{A}^T_i)^n_{i=1}$ and the column partition $\mathbf{B} = (\mathbf{B}_i)^m_{i=1}$ to determine $(\mathbf{A}_i\mathbf{B}_j)^{n,m}_{i=1,j=1}$ \cite{Yu2017PolynomialCA} (note that the partitions are matrices). Similarly, we summarize problem space (ii) by a  system that receives two lists of matrices $(\mathbf{A}_i)^p_{i=1}, (\mathbf{B}_i)^p_{i=1}$ and the goal is to use coding to determine $(\mathbf{A}_i\mathbf{B}_i)^{p}_{i=1}$ \cite{Yu2019LagrangeCC,jia2021cross}.  While state-of-the-art codes for these problem spaces provide near optimal straggler resilience, ultimately they rely on the rigid structure of the computation tasks they aim to compute and are hard to extrapolate to more variable tasks. For example, a variable computation task could require the following matrix products $(\mathbf{A}_1\mathbf{B}_1,\mathbf{A}_1\mathbf{B}_2,\mathbf{A}_2\mathbf{B}_2,\mathbf{A}_2\mathbf{B}_3)$ which is clearly not well suited to the previous two problem spaces.  We seek to generalize these two problem spaces of matrix-matrix computation and provide a novel coding scheme. 
	
	In this work, we introduce the Variable Coded Distributed Batch Matrix Multiplication (VCDBMM) problem that generalizes the two stated problem spaces. Assume that a distributed system is provided with two sets of matrices $\mathcal{A}=\{\mathbf{A}_1,\mathbf{A}_2,\dots,\mathbf{A}_{|\mathcal{A}|}\}$ and  $\mathcal{B}=\{\mathbf{B}_1,\mathbf{B}_2,\dots,\mathbf{B}_{|\mathcal{B}|}\}$ and a set of computation goals $\mathcal{S} = \{(i_1,j_1),(i_2,j_2), \dots (i_{|\mathcal{S}|},j_{|\mathcal{S}|})\}$ where the objective is to calculate the matrix multiplication $\mathbf{A}_i\mathbf{B}_j$ for every $(i,j) \in \mathcal{S}$. This model is a natural expansion of the previous models and can be included in a broad range of applications where data re-usability is common such as recommender systems \cite{Smith_Amazon_2017} and multi-model training \cite{ketkar2017stochastic}. %
	1) \textbf{Recommender Systems with Linear Classifiers} \cite{zhang2002recommender,Smith_Amazon_2017} -  Matrix $\mathbf{A}_i$ ($\mathbf{B}_j$) represents the $i^{\text{th}}$ user ($j^{\text{th}}$ linear recommender algorithm) and the system wishes to determine different recommendations for different users based on the user's characteristics. For example, a consumer website, such as Amazon, would want to determine different recommendations of products for college students in comparison to senior citizens. 
	2) \textbf{Concurrent Machine Learning Model Training with Stochastic Gradient Descent (SGD)} \cite{ketkar2017stochastic}- SGD is a variant of gradient descent where the gradient step at each iteration only depends on a small subset of the dataset. Recent research has determined that sampling without replacement can improve the performance of SGD \cite{safran2020good,rajput2021permutation}, and determining the batch order is now an important hyperparameter. Concurrent training of multiple AI models, such as linear regression, with different SGD orderings fits within the space of VCDBMM since each model would apply computations on a different subset of the data.  Essentially, VCDBMM applies to problems where there is a lot of data re-usability in concurrent matrix product operations.

	To solve the VCDBMM problem, we introduce Flexible Cross-Subspace Alignment (FCSA) codes (inspired in part by Cross-Subspace Alignment codes \cite{jia2021cross}) that take advantage of the redundancy in the VCDBMM problem to provide straggler resilience. The main idea behind FCSA codes is partitioning the desired computations into groups and carefully coding these groups using rational functions to limit the interference caused by undesired computations. We provide a full characterization of FCSA codes in terms of key system parameters such as communication cost, worker computation complexity, and decoding complexity. A key parameter that we focus on is the recovery threshold which is the minimum number of worker outputs needed at the fusion node so that it may recover the desired computations. A small recovery threshold results in higher straggler resilience. Due to the variable nature of the VCDBMM problem, FCSA codes do not provide a single solution but a space of coding solutions that have to be optimized to determine the best recovery threshold. We provide a methodology and simple constructions within the FCSA coding space that can be applied to any VCDBMM problem. While these constructions may not necessarily provide the optimal recovery threshold, we demonstrate that, under practically relevant conditions, FCSA codes are near-optimal and achieve a recovery threshold within a factor of $2$ of the optimal solution under any realization of the VCDBMM problem and outperform naive applications of existing methods. Furthermore, we simulate instances of the VCDBMM problem and demonstrate that in practice FCSA codes achieve even better factors of optimality, even as low as $1.7$. 
	
	This paper is organized as follows. In Section \ref{sec:prelim}, we define the system model and provide background on relevant constructions. In Section  \ref{sec:fcsa}, we construct our base FCSA codes and provide illustrative examples that demonstrate both how to explicitly construct the codes and the benefits over the existing schemes in terms of the recovery threshold. In Section \ref{sec:fcsa_flex}, we extend our base FCSA codes to provide more flexibility in communication and computation complexity at the cost of a higher recovery threshold. In Section \ref{sec:lower_bound}, we prove a lower bound on the minimum possible recovery threshold for VCDBMM.  In Section \ref{section:fcsa_types}, we provide special cases of FCSA codes that are provably within a multiplicative factor of $2$ with respect to the mentioned lower bound which is comparable to existing schemes. In Section \ref{sec:sim}, we provide numerical simulations that demonstrate that our FCSA codes can in practice achieve optimality gaps that are much smaller than the theoretical bound implies, and, therefore, can outperform previous state of the art codes in terms of the recovery threshold.
	
	\textit{Notation:} Let boldface capital letters represent matrices. Let $\mathbb{Z}_{>}$ and $\mathbb{Z}_{\geq}$ denote the set of positive and non-negative integers, respectively. 
	For any $n \in \mathbb{Z}_{>}$, $[n]$ denotes the set $\{1,2,\dots,n\}$. 
	Given two sets $A$ and $B$, $A \times B$ is the Cartesian product of the two sets. The notation $\widetilde{\mathcal{O}}(a\log^2b)$ suppresses poly log terms, i.e.,  $\widetilde{\mathcal{O}}(a\log^2b) = \mathcal{O}(a\log^2b\log\log(b))$.

	\vspace{-0.1cm}
	\section{System Model and Preliminaries}\label{sec:prelim}
	\vspace{-0.1cm}
	
	\begin{figure}[t]
		\centering
		\includegraphics[width= \linewidth]{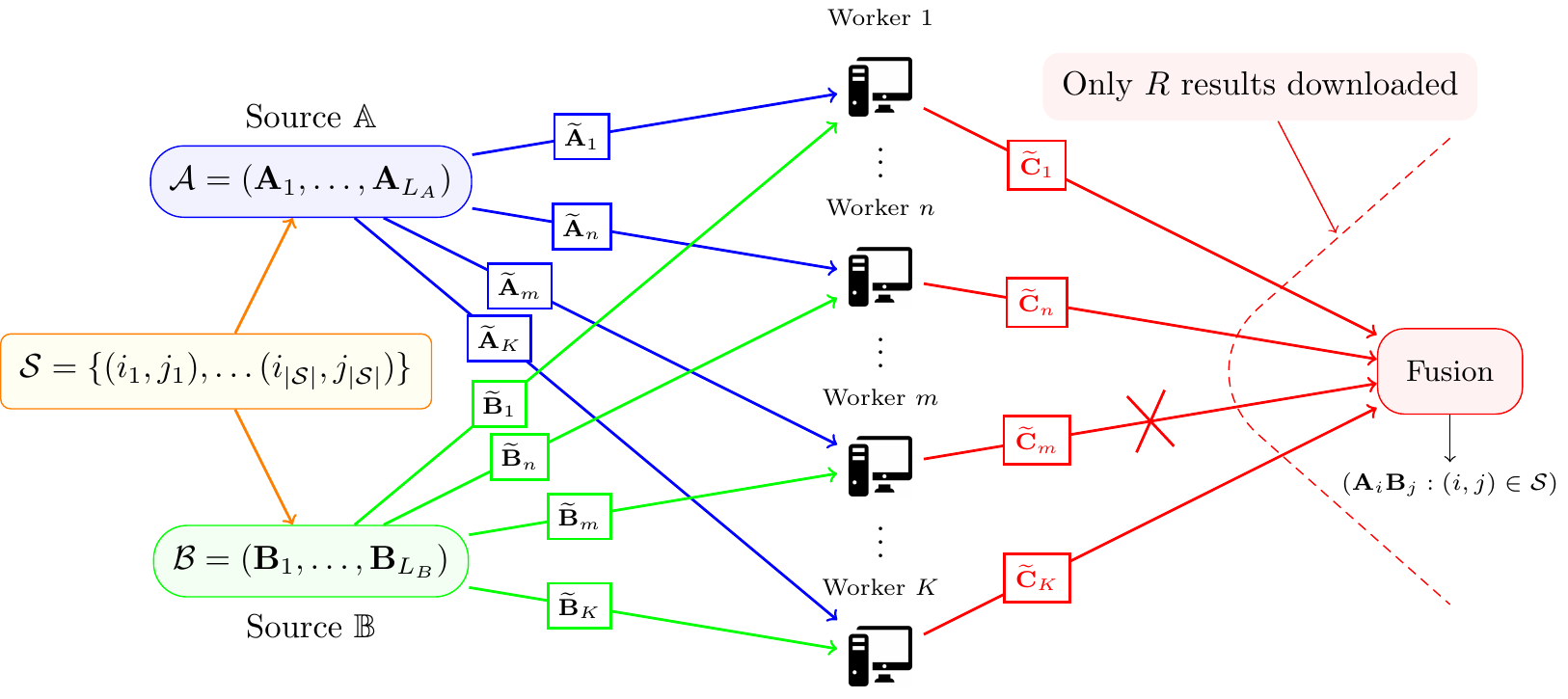}
		\vspace{-0.1cm}
		\caption{System Model for Variable Coded Distributed Batch Matrix Multiplication with a recovery threshold of $R$.}
		\label{fig:system_model}	
		\vspace{-0.5cm}
	\end{figure}

	We now define the Variable Coded Distributed Batch Matrix Multiplication (VCDBMM) problem. As shown in Fig.~\ref{fig:system_model}, consider two source nodes each of which generates a set of matrices $\mathcal{A}=\{\mathbf{A}_1,\dots,\mathbf{A}_{L_A}\}$ and $\mathcal{B}=\{\mathbf{B}_1,\mathbf{B}_2,\dots,\mathbf{B}_{L_B}\}$ such that for all $i \in [L_A]$ and $j\in [L_B]$ we have $\mathbf{A}_i\in \mathbb{F}^{\alpha \times \beta}$ and $\mathbf{B}_j\in \mathbb{F}^{\beta \times \gamma}$ where $\alpha,\beta,\gamma \in \mathbb{Z}_{>}$ and $\mathbb{F}$ is a finite field. We name these nodes as Source $\mathbb{A}$ and Source $\mathbb{B}$. Additionally, a computation list is provided  $\mathcal{S} = \{(i_1,j_1),(i_2,j_2), \dots (i_{|\mathcal{S}|},j_{|\mathcal{S}|})\}$  where the objective is for the fusion node to obtain the matrix multiplication $\mathbf{A}_i\mathbf{B}_j$ for every $(i,j) \in \mathcal{S}$. To accomplish this goal, the system has $K$ worker nodes which perform the bulk of the computation. To avoid degeneracy, we assume that $\beta \geq \max(L_A\alpha,L_B\gamma)$. Additionally, $\mathcal{S}$ must contain an index for every matrix in $\mathcal{A}$ and $\mathcal{B}$, otherwise we may simply prune this matrix without affecting the final computation.

	Let $\mathbf{f} = (f_1,f_2,\dots,f_K)$ be the set of encoding functions for Source $\mathbb{A}$ and let $\mathbf{g} = (g_1,g_2,\dots,g_K)$ be the set of encoding functions for Source $\mathbb{B}$. For the $k^{\text{th}}$, $1\leq k \leq K$, worker, the source nodes transmit $\widebf{A}_k = f_k(\mathcal{A}) , \widebf{B}_k = g_k(\mathcal{B}),$
	where $\widebf{A}_k \in \mathbb{F}^{\widetilde{\alpha}_k \times \widetilde{\beta}_k}$ and $\widebf{B}_k \in \mathbb{F}^{\widetilde{\beta}_k \times \widetilde{\gamma}_k}$ for some $\widetilde{\alpha}_k,\widetilde{\beta}_k,\widetilde{\gamma}_k \in \mathbb{Z}_{>}$. We assume that workers are oblivious of the encoding/decoding process and simply perform matrix multiplication on the inputs they receive. Therefore, the $k^{\text{th}}$ worker  outputs $\widebf{C}_k = \widebf{A}_k\widebf{B}_k$ to the fusion node where $\widebf{C}_k \in \mathbb{F}^{\widetilde{\alpha}_k \times \widetilde{\gamma}_k}$. This model assumes that some workers are stragglers and may fail to respond. The fusion node downloads the responses from the non-straggling workers and attempts to recover the desired products $\{\mathbf{A}_i\mathbf{B}_j: (i,j) \in \mathcal{S}\}$ using a class of decoding functions (denoted $\mathbf{d}$).  We define $\mathbf{d}=\{d_{\mathcal{R}}: \mathcal{R}\subset[K]\}$ where $d_{\mathcal{R}}$ is the decoding function used when the set of non-straggling workers is $\mathcal{R} \subset [K]$. Therefore, we denote a VCDBMM code by the triple $(\mathbf{f},\mathbf{g},\mathbf{d})$.
	
	For convenience, we can define the VCDBMM problem using graph theoretic terms.  First, we observe that the desired computations $\mathcal{S}$ can be specified by using bipartite graphs. Let $G = ([L_A]\cup [L_B],\mathcal{S})$ where an edge exists between left vertex $i \in [L_A]$ and right vertex $j\in [L_B]$ if and only if $(i,j) \in \mathcal{S}$. As such, there is a one-to-one correspondence between the computation task and the bipartite graph which allows us to use them interchangeably. Similarly, we can define a VCDBMM problem by a bi-adjacency matrix of this bipartite graph. Finally, we define the degree of the $i^{\text{th}}$ left vertex as $d^A_i = |\{j: (i,j) \in \mathcal{S} \}|$ and the degree of the $j^{\text{th}}$ right vertex as $d^B_j = |\{i: (i,j) \in \mathcal{S} \}|$. In subsequent sections, we will be analyzing ensembles of VCDBMM problems. We define one ensemble to be an Erdős–Rényi model where each edge is included in the graph with probability $\lambda$, $0< \lambda <1$, independently from every other edge. Thus, we parameterize the ensemble as $V_{\lambda}(L_A,L_B)$. We define an additional model where each $\mathbf{A}_i$ matrix can be multiplied with at most $k$ $\mathbf{B}_i$ matrices. In terms of the bipartite graph, this ensemble first uniformly selects a degree for each left vertex from $1$ to $k$ and then randomly assigns its edges to the right vertices, avoiding the creation of parallel edges. We parameterize the ensemble as $V_k(L_A,L_B)$. These ensembles are representative of the different types of use cases for FCSA codes. 

	Now, we  define the major parameters for our system:
	\begin{itemize}
		\item \textit{Recovery Threshold}: Minimum integer $R$ where for any $\mathcal{R}\subset[K]$ such that $|\mathcal{R}| \geq R$ and for any realization of $\mathcal{A}$ and $\mathcal{B}$, then $d_{\mathcal{R}}(\{\widebf{C}_i: i\in \mathcal{R}\}) = \{\mathbf{A}_i\mathbf{B}_j: (i,j) \in \mathcal{S}\}$.
		\item \textit{Upload Cost}: $U_A$ and $U_B$ are the average number of symbols (from $\mathbb{F}$) sent from sources $\mathbb{A}$ and $\mathbb{B}$ to the workers, i.e., $	U_A =\frac{1}{K} \sum_{i=1}^{K}\widetilde{\alpha}_k\widetilde{\beta}_k, 
		U_B =\frac{1}{K} \sum_{i=1}^{K}\widetilde{\beta}_k\widetilde{\gamma}_k.$
		\item \textit{Download Cost}: $D_C$ is the worst-case average number of symbols (from $\mathbb{F}$) transmitted from the workers to the fusion node, i.e., $D_C=\max_{\mathcal{R}, \mathcal{R}\subset[K], |\mathcal{R}|=R}\frac{\sum_{k\in\mathcal{R}}\widetilde{\alpha}_k\widetilde{\gamma}_k}{R}$ where $R$ is the recovery threshold.
		\item \textit{Encoding Complexity}: $C_A$ and $C_B$ are the encoding complexities to generate the outputs of sources $\mathbb{A}$ and $\mathbb{B}$, respectively.
		\item \textit{Worker Complexity}: $C_w$ is the average computational complexity across workers. 
		\item \textit{Decoding Complexity}: $C_d$ is the decoding complexity of extracting the desired computations from the $R$ responses. 
	\end{itemize}
	
	\vspace{-0.5em}
	\subsection{Previous Results on Interpolating Rational Functions}
	
	For our coding schemes, we will rely on the following lemma about rational function interpolation as a fundamental building block of our code construction, taken from \cite{gasca1989computation}.
	\begin{lemma}\label{lemma:cauchy_vandermonde}
		(\cite{gasca1989computation}) Let $f_{1}, f_{2}, \cdots, f_{N}, x_1,x_2,\cdots,x_K$ be $N+K$ distinct elements of $\mathbb{F}$, with $|\mathbb{F}|\geq N+K$. Each element $f_i$ has an associated multiplicity of $u_i \in \mathbb{Z}_{\geq}$ where $u_1+u_2 + \cdots+ u_N = M$. Let $M+1 \leq K$. Then, the coefficients $e_{i,j}$ of the following function can be interpolated from the function outputs of the $K$ evaluation points (i.e., $\{F(x_i):i\in[K]\}$):
		\begin{equation}
		F(z) = \sum_{i=1}^{N}\sum_{j=1}^{u_i}\frac{e_{i,j}}{(z-f_i)^j} + \sum_{j=0}^{K-M-1}e_{0,j}z^j. \label{eq:cauchy_van_function}
		\end{equation}
	\end{lemma}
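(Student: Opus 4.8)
The plan is to reduce the claim to the nonsingularity of a single $K\times K$ linear system over $\mathbb{F}$. First I would count the unknowns: the rational part $\sum_{i=1}^{N}\sum_{j=1}^{u_i} e_{i,j}(z-f_i)^{-j}$ contributes $\sum_{i=1}^N u_i = M$ coefficients, and the polynomial part $\sum_{j=0}^{K-M-1} e_{0,j}z^j$ contributes $K-M$ coefficients (a well-defined count, since the hypothesis $M+1\le K$ gives $K-M-1\ge 0$), for a total of exactly $K$ unknowns, which I collect into a vector $\mathbf{e}$. Since every evaluation point satisfies $x_k\ne f_i$ for all $i$, each value $F(x_k)$ is a well-defined $\mathbb{F}$-linear functional of $\mathbf{e}$, so the data $\{F(x_k):k\in[K]\}$ constitutes a square system $\mathbf{V}\mathbf{e}=(F(x_1),\dots,F(x_K))$. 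It therefore suffices to show that $\mathbf{V}$ is invertible; the interpolated coefficients are then simply $\mathbf{e}=\mathbf{V}^{-1}(F(x_1),\dots,F(x_K))$.

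To prove invertibility I would show that the homogeneous system has only the trivial solution. Assume $F(x_k)=0$ for all $k\in[K]$ and set $P(z):=F(z)\prod_{i=1}^N(z-f_i)^{u_i}$. Clearing denominators turns each term $e_{i,j}(z-f_i)^{-j}$ into the polynomial $e_{i,j}(z-f_i)^{u_i-j}\prod_{i'\ne i}(z-f_{i'})^{u_{i'}}$ of degree $M-j\le M-1$, and turns $e_{0,j}z^j$ into a polynomial of degree $j+M\le K-1$; hence $\deg P\le K-1$. Because $\prod_i(x_k-f_i)^{u_i}\ne 0$ for every $k$, the assumption $F(x_k)=0$ forces $P(x_k)=0$ at the $K$ distinct points $x_1,\dots,x_K$, so a polynomial of degree at most $K-1$ has $K$ roots and must vanish identically: $P\equiv 0$.

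It remains to deduce $\mathbf{e}=\mathbf{0}$ from $P\equiv 0$, and this is the crux. Equivalently, I must show that the $K$ polynomials $g_{i,j}(z):=(z-f_i)^{u_i-j}\prod_{i'\ne i}(z-f_{i'})^{u_{i'}}$ (for $i\in[N]$, $1\le j\le u_i$) together with $h_j(z):=z^j\prod_{i'}(z-f_{i'})^{u_{i'}}$ (for $0\le j\le K-M-1$) are linearly independent over $\mathbb{F}$ --- precisely the nonsingularity of the Cauchy--Vandermonde matrix. I would argue this by localizing at each node $f_i$: every $h_j$, and every $g_{i',j'}$ with $i'\ne i$, vanishes at $f_i$ to order $u_i$, whereas $g_{i,1},\dots,g_{i,u_i}$ vanish there to the distinct orders $u_i-1,\dots,0$ with nonzero leading Taylor coefficients (nonzero because the $f_i$ are pairwise distinct). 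Reading off the Taylor coefficients of $P$ at $f_i$ of orders $0,1,\dots,u_i-1$ then peels the $e_{i,j}$ off one at a time and forces each to vanish; doing this for all $i$ reduces $P\equiv 0$ to $\big(\sum_{j=0}^{K-M-1}e_{0,j}z^j\big)\prod_i(z-f_i)^{u_i}\equiv 0$, and since $\prod_i(z-f_i)^{u_i}$ is a nonzero polynomial we get $e_{0,j}=0$ for all $j$. (Alternatively, one may quote uniqueness of the partial-fraction decomposition of a formal rational function, which holds over any field.) The only role of the field size here is in the hypothesis itself, to guarantee that $N+K$ distinct elements exist, so nothing beyond $|\mathbb{F}|\ge N+K$ is needed; I would also quickly sanity-check the boundary cases $M=K-1$ (polynomial part a lone constant $e_{0,0}$) and $u_i=0$ (that node drops out of all sums).
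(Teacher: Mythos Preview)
The paper does not actually prove this lemma; it is quoted verbatim as a known result from \cite{gasca1989computation} and used as a black box throughout. Consequently there is no ``paper's own proof'' to compare against.

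Your argument is correct and is essentially the standard proof that the confluent Cauchy--Vandermonde matrix is nonsingular: you reduce the interpolation problem to a square $K\times K$ linear system, clear denominators to turn a putative kernel element into a polynomial $P$ of degree at most $K-1$ with $K$ distinct roots, conclude $P\equiv 0$, and then recover $\mathbf{e}=\mathbf{0}$ by reading off the local Taylor expansion at each pole $f_i$ (which gives a triangular system in the $e_{i,j}$ for that $i$, with nonzero diagonal since the $f_i$ are distinct). The degree bookkeeping and the handling of the boundary cases are accurate. This is exactly the approach one finds in the cited reference, so your write-up would serve as a self-contained replacement for the citation.
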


	\subsection{Relevant Constructions}\label{subsec:relevant_cons}
	In this section, we provide a brief overview of three relevant code constructions that are achievable strategies for the VCDBMM problem. We consider these three schemes as a baseline for comparison to our FCSA codes.

	\subsubsection{Polynomial Codes}\label{subsubsec:poly_codes}
	Polynomial codes \cite{Yu2017PolynomialCA} are codes based on matrix partitioning to efficiently distribute the computation of multiplying a single pair of large matrices $\mathbf{A}$ and $\mathbf{B}$. Polynomial codes can be used to solve the VCDBMM problem by having the Polynomial code calculate all pairs of $\mathbf{A}_i$ and $\mathbf{B}_j$ which would require a recovery threshold of $L_AL_B$. While such a computation strategy would clearly not be optimal for a general VCDBMM problem, it offers a good upper bound on achievable schemes and, in fact, performs better than some of the other codes for highly dense VCDBMM problems.

	\subsubsection{Lagrange Coded Computing Codes}
	Lagrange Coded Computing (LCC) codes \cite{Yu2019LagrangeCC} are a class of codes designed for distributed batch multivariate polynomial evaluation based on encoding using Lagrange polynomials. For the purposes of this paper, we note that LCC codes can solve the distributed Batch Matrix Multiplication problem of pairwise multiplying two sets of $L$ matrices $\{\mathbf{A}_1,\mathbf{A}_2,\dots,\mathbf{A}_L\}$ and $\{\mathbf{B}_1,\mathbf{B}_2,\dots,\mathbf{B}_L\}$ into $\{\mathbf{A}_1\mathbf{B}_1,\mathbf{A}_2\mathbf{B}_2,\dots,\mathbf{A}_L\mathbf{B}_L\}$ with a recovery threshold of $2L-1$. While this is a special case of the VCDBMM problem, LCC codes can be used to solve the general VCDBMM problem by duplicating $\mathbf{A}_i$ and $\mathbf{B}_j$ matrices to create two lists of matrices with $|\mathcal{S}|$ matrices which requires a recovery threshold of $2|\mathcal{S}| - 1$.

	\subsubsection{Cross-Subspace Alignment Codes}
	We now describe a class of codes that are a precursor to FCSA codes: Cross-Subspace Alignment (CSA) codes \cite{jia2021cross}. The main idea behind CSA codes is that the desired computations are encoded in the rational terms $\frac{1}{(f_i - x_k)^r}$ while all undesired computations (interference terms) are encoded in the polynomial terms $x^r_k$. For future comparison to FCSA, we now provide an example of CSA codes. Similar to LCC codes, CSA codes are designed for the Batch Matrix Multiplication problem. As such, let $\mathcal{A} = \{\mathbf{A}_1,\mathbf{A}_2\}$, $\mathcal{B} = \{\mathbf{B}_1,\mathbf{B}_2\}$, and $\mathcal{S} = \{(1,1),(2,2)\}$, i.e., the goal is to compute $\mathbf{A}_1\mathbf{B}_1$ and $\mathbf{A}_2\mathbf{B}_2$. Let $f_1$ and $f_2$ be distinct elements in $\mathbb{F}$. Consider the following encoding polynomials:
	\begin{align}
	\mathbf{A}(x) &= (x-f_1)(x-f_2)(\frac{\mathbf{A}_1}{x-f_1}+\frac{\mathbf{A}_2}{x-f_2}) = (x-f_2)\mathbf{A}_1 + (x-f_1)\mathbf{A}_2, \\
	\mathbf{B}(x) &=\frac{\mathbf{B}_1}{x-f_1}+\frac{\mathbf{B}_2}{x-f_2}.
	\end{align}
	The transmitted matrices $\widebf{A}_k$ and $\widebf{B}_k$ are different evaluations of the previous polynomials, i.e., $\widebf{A}_k = \mathbf{A}(x_k)$ and $\widebf{B}_k = \mathbf{B}(x_k)$ for some $x_k$ distinct from $f_1$ and $f_2$. As such, the result from a worker is an evaluation of the following polynomial
	\begin{align}
	\mathbf{A}(x)\mathbf{B}(x) &= \frac{\mathbf{A}_1\mathbf{B}_1(x-f_2)}{x-f_1}+\frac{\mathbf{A}_2\mathbf{B}_2(x-f_1)}{x-f_2} \\
	&= \frac{\mathbf{A}_1\mathbf{B}_1(f_1-f_2)}{x-f_1} + \frac{\mathbf{A}_2\mathbf{B}_2(f_2-f_1)}{x-f_2} + \mathbf{A}_1\mathbf{B}_1 + \mathbf{A}_2\mathbf{B}_2.
	\end{align}
	 By Lemma \ref{lemma:cauchy_vandermonde}, $\mathbf{A}(x)\mathbf{B}(x)$ can be interpolated using $3$ evaluations and the desired computations $\mathbf{A}_1\mathbf{B}_1$ and $\mathbf{A}_2\mathbf{B}_2$ can be easily extracted from the coefficients of the rational terms. Thus, for this example, the recovery threshold of CSA codes is $3$. In general, given $L$ pairs of matrices, CSA codes have a recovery threshold\footnote{CSA codes can have a smaller recovery threshold if the matrices are allowed to be grouped up and each group is individually encoded using MDS codes which changes the communication and computation complexity. To provide a fair comparison, we shall focus on comparing CSA codes with FCSA codes when using equivalent communication and computation complexity.} of $2L-1$. Since CSA codes solve the Batch Matrix Multiplication problem, we use a similar technique as for LCC codes and duplicate the matrices to get a recovery threshold of $2|\mathcal{S}| - 1$.

	Using the previously mentioned code constructions, we thus have an upper bound on the achievable recovery threshold of the VCDBMM problem with $\min(L_AL_B,2|\mathcal{S}| - 1)$. In the next section, we provide the construction for our new FCSA codes which significantly improve upon this recovery threshold. 
	\vspace{-1em}
	\section{Flexible Cross-Subspace Alignment codes } \label{sec:fcsa}
	\vspace{-0.3em}
	
	In this section, we present our Flexible Cross-Subspace Alignment (FCSA) codes to solve the VCDBMM problem. We shall first describe important constructs for our code design. We will subsequently demonstrate a motivating example using these constructs. Finally, we will provide our code construction of FCSA codes. 
	
		\subsection{Important Constructs}
	Before describing the code construction, we define important constructs for our code. The first construct will be known as the task assignment.
	
	\begin{definition}\label{def:task}
		For a given computation list $\mathcal{S}$, we define a task assignment $\mathcal{Q}$ as a set of tuples $\mathcal{Q} = \{(\mathcal{L}^q_A,\mathcal{L}^q_B)\}_{q=1}^{|\mathcal{Q}|}$ where $\mathcal{L}^q_A \subseteq [L_A],\mathcal{L}^q_B \subseteq [L_B]$ and define $L^q_A = |\mathcal{L}^q_A|$ and $L^q_B = |\mathcal{L}^q_B|$. The set $\mathcal{Q}$ is chosen such that $(L^q_A\times L^q_B) \cap (L^{q'}_A\times L^{q'}_B) = \emptyset$ for all distinct $q,q' \in [|\mathcal{Q}|]$ and for every $s \in \mathcal{S}$ there exists exactly one $q \in [|\mathcal{Q}|]$ such that $s \in \mathcal{L}^q_A\times \mathcal{L}^q_B$. We shall refer to each tuple as a task group.
	\end{definition}
	Intuitively, the task assignments are grouping computations such that each computation within a task assignment $\mathcal{L}^q_A \times \mathcal{L}^q_B$ will be recovered. Essentially, each task group is similar to problem space (i) where we will acquire all matrix products between two sets. Since every $s \in \mathcal{S}$ is part of some task group, we will be able to recover the desired computations. The next construct we define is the power assignment.
	
	\begin{definition}\label{def:power}
		Given a task assignment $\mathcal{Q}$, we define a power assignment $\mathcal{P}$ as a tuple of vectors $\mathcal{P}= \{(P^{A,q},P^{B,q})\}_{q=1}^{|\mathcal{Q}|}$ where $P^{A,q} \in (\mathcal{Z}_{\geq})^{L_A}$ and $P^{B,q} \in (\mathcal{Z}_{\geq})^{L_B}$ for all $q \in [|\mathcal{Q}|]$. The power assignment $\mathcal{P}$ must satisfy the following constraints:
		\begin{enumerate}
			\item $P^{A,q}_i=0$ if $i \notin \mathcal{L}^{q}_A$ and $P^{B,q}_j = 0$ if $j \notin \mathcal{L}^{q}_B$ for all $q\in [|\mathcal{Q}|]$. \label{def:power_con_1}
			\item For each $q \in [|\mathcal{Q}|]$, exactly one of the following is true: \label{def:power_con_2}
			\begin{enumerate}[label=\roman*)]
				\item $P^{A,q}_i = L^q_AL^q_B - x+1$ for some $x\in [L^q_A]$ and $P^{B,q}_j =yL^q_A$ for some $y \in [L^q_B]$ for all $i \in \mathcal{L}^q_A$, $j\in \mathcal{L}^q_B$. \label{con:power_a}
				\item $P^{B,q}_i = L^q_AL^q_B - x+1$ for some $x\in [L^q_B]$ and $P^{A,q}_j =yL^q_B$ for some $y \in [L^q_A]$  for all $i \in \mathcal{L}^q_A$, $j\in \mathcal{L}^q_B$. \label{con:power_b}
			\end{enumerate}
			\item For each $q \in [|\mathcal{Q}|]$, all non-zero elements in $P^{A,q}$ are distinct from each other and all non-zero elements in $P^{B,q}$ are distinct from each other. \label{def:power_con_3}
		\end{enumerate}
	\end{definition}
	
	The power assignment is defined in such a way as to prevent matrix products from interfering with each other in our coding scheme, as will be shortly shown. We prove the following key facts about the power assignment.
	\begin{lemma} \label{lemma:power}
		From Definition \ref{def:power}, we can prove the following facts about the power assignment:
		\begin{enumerate}[labelsep=0.5mm]
			\item For all $q\in [|\mathcal{Q}|]$, $P^{A,q}_i+P^{B,q}_j > L^q_AL^q_B$ for $(i,j) \in \mathcal{L}^q_A\times \mathcal{L}^q_B$  and $P^{A,q}_i+P^{B,q}_j \leq L^q_AL^q_B$ for  $(i,j) \notin \mathcal{L}^q_A\times \mathcal{L}^q_B$. \label{lemma:power_state_1}
			\item For $(i,j) \in \mathcal{L}^q_A\times \mathcal{L}^q_B$, $P^{A,q}_i+P^{B,q}_j -L^q_AL^q_B \in [L^q_AL^q_B]$.\label{lemma:power_state_2}
			\item For distinct pairs $(i,j),(k,l) \in  \mathcal{L}^q_A\times \mathcal{L}^q_B$, $P^{A,q}_i+P^{B,q}_j \neq P^{A,q}_k+P^{B,q}_l $.\label{lemma:power_state_3}
		\end{enumerate}
	\end{lemma}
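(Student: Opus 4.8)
The plan is to work directly from Definition \ref{def:power}, handling the two symmetric cases in Constraint \ref{def:power_con_2} (\ref{con:power_a} and \ref{con:power_b}) by the same argument with the roles of $A$ and $B$ swapped; so I would fix $q$, assume case \ref{con:power_a} holds, and write $P^{A,q}_i = L^q_AL^q_B - x_i + 1$ for $i \in \mathcal{L}^q_A$ with $x_i \in [L^q_A]$, and $P^{B,q}_j = y_j L^q_A$ for $j \in \mathcal{L}^q_B$ with $y_j \in [L^q_B]$. By Constraint \ref{def:power_con_3} the $x_i$ are distinct and the $y_j$ are distinct, and since $x_i$ ranges over the $L^q_A$ values in $[L^q_A]$ and $y_j$ over the $L^q_B$ values in $[L^q_B]$, the maps $i \mapsto x_i$ and $j \mapsto y_j$ are in fact bijections onto $[L^q_A]$ and $[L^q_B]$ respectively. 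This observation is the workhorse for all three statements.

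For Statement \ref{lemma:power_state_1}, I would compute, for $(i,j) \in \mathcal{L}^q_A \times \mathcal{L}^q_B$,
\begin{equation}
P^{A,q}_i + P^{B,q}_j - L^q_AL^q_B = (y_j - 1)L^q_A + (L^q_A - x_i + 1),
\end{equation}
and observe that $L^q_A - x_i + 1 \in [L^q_A]$ (since $x_i \in [L^q_A]$) and $y_j - 1 \in \{0,1,\dots,L^q_B-1\}$, so the right-hand side lies in $\{1,2,\dots,L^q_AL^q_B\}$; in particular it is strictly positive, giving $P^{A,q}_i + P^{B,q}_j > L^q_AL^q_B$. For $(i,j) \notin \mathcal{L}^q_A \times \mathcal{L}^q_B$, at least one of $i \notin \mathcal{L}^q_A$ or $j \notin \mathcal{L}^q_B$ holds, so by Constraint \ref{def:power_con_1} the corresponding power is $0$; if both are zero the sum is $0 \leq L^q_AL^q_B$, and if exactly one is zero the sum is either $P^{A,q}_i = L^q_AL^q_B - x_i + 1 \leq L^q_AL^q_B$ or $P^{B,q}_j = y_j L^q_A \leq L^q_B L^q_A$, in every case at most $L^q_AL^q_B$. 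Statement \ref{lemma:power_state_2} is then immediate: the displayed identity already exhibits $P^{A,q}_i + P^{B,q}_j - L^q_AL^q_B$ as an element of $[L^q_AL^q_B]$.

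For Statement \ref{lemma:power_state_3}, suppose $(i,j),(k,l) \in \mathcal{L}^q_A \times \mathcal{L}^q_B$ with $P^{A,q}_i + P^{B,q}_j = P^{A,q}_k + P^{B,q}_l$. Subtracting $L^q_AL^q_B$ from both sides and using the identity gives $(y_j-1)L^q_A + (L^q_A - x_i + 1) = (y_l-1)L^q_A + (L^q_A - x_k + 1)$, i.e. $(y_j - y_l)L^q_A = (x_i - x_k)$. Since $x_i, x_k \in [L^q_A]$ we have $|x_i - x_k| < L^q_A$, so the only multiple of $L^q_A$ it can equal is $0$; hence $x_i = x_k$ and $y_j = y_l$, which by the bijectivity noted above forces $i = k$ and $j = l$. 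I expect no serious obstacle here — the main care point is simply making the uniqueness-of-base-$L^q_A$-representation argument rigorous (i.e. that $(y_j-1)L^q_A + (L^q_A - x_i+1)$ is exactly the base-$L^q_A$ decomposition of a number in $[L^q_AL^q_B]$ with "digits" in the right ranges), and remembering to run the mirror-image computation for case \ref{con:power_b}, where the same steps go through verbatim with $L^q_A \leftrightarrow L^q_B$ and the roles of the two index sets exchanged.
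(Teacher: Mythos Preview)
Your proposal is correct and follows essentially the same approach as the paper: both reduce by symmetry to case~\ref{con:power_a}, use Constraint~\ref{def:power_con_1} to handle $(i,j)\notin\mathcal{L}^q_A\times\mathcal{L}^q_B$, and exploit that the $P^{A,q}$ values span a range of width $L^q_A-1$ while the $P^{B,q}$ values are multiples of $L^q_A$ to get distinctness. Your packaging via the single identity $P^{A,q}_i+P^{B,q}_j-L^q_AL^q_B=(y_j-1)L^q_A+(L^q_A-x_i+1)$ as a base-$L^q_A$ representation is a slightly cleaner way to handle all three parts at once, but the underlying argument is the same.
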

	\begin{proof}
		We will prove each part in order. Without loss of generality, we assume that condition \ref{def:power_con_2}.\ref{con:power_a} holds in Definition \ref{def:power} due to the symmetry of conditions \ref{def:power_con_2}.\ref{con:power_a} and \ref{def:power_con_2}.\ref{con:power_b}. Additionally, all the statements are only for the values within one group in $\mathcal{Q}$. Thus, assume that we focus on the $q^{\text{th}}$ group in $\mathcal{Q}$.
		\begin{enumerate}[labelsep=0.5mm]
			\item First, we note that $P^{A,q}_i \leq L^q_AL^q_B$ and $P^{B,q}_i \leq L^q_AL^q_B$. Next, consider the case $(i,j) \notin \mathcal{L}^q_A\times \mathcal{L}^q_B$. By Definition \ref{def:power}.\ref{def:power_con_1}), we have that at least $P^{A,q}_i$ or $P^{B,q}_i$ is zero. Thus, $P^{A,q}_i+P^{B,q}_j \leq L^q_AL^q_B$. Now, consider the  case $(i,j) \in \mathcal{L}^q_A\times \mathcal{L}^q_B$. Recall that $P^{A,q}_i =  L^q_AL^q_B - x+1$ for some $x\in [L^q_A]$ which implies $P^{A,q}_i \geq L^q_AL^q_B - L^q_A+1$. Additionally, $P^{B,q}_j =yL^q_A$ for some $y \in [L^q_B]$ which implies $P^{B,q}_j \geq L^q_A$. Hence, $P^{A,q}_i+P^{B,q}_j \geq L^q_AL^q_B - L^q_A+1 +L^q_A = L^q_AL^q_B +1 > L^q_AL^q_B$.
			\item  Let $(i,j) \in  \mathcal{L}^q_A\times \mathcal{L}^q_B$. Due to Lemma \ref{lemma:power}.\ref{lemma:power_state_1}), we know that $P^{A,q}_i+P^{B,q}_j - L^q_AL^q_B > 0$. Thus, we only have to prove that $P^{A,q}_i+P^{B,q}_j \leq 2 L^q_AL^q_B$. This is straightforward to prove by following the similar logic used earlier in the lemma to arrive at $P^{A,q}_i \leq  L^q_AL^q_B$ and $P^{B,q}_j \leq L^q_AL^q_B$ which completes this part of the proof.
			\item To prove this fact, observe that the difference between the highest and lowest value that $P^{A,q}_i$ can take is $L^q_A-1$. Additionally, note that $P^{B,q}_i$ takes values in the set $\{L^q_A,2L^q_A, \dots, L^q_BL^q_A\}$ and thus the difference between any two values $P^{B,q}_i$ takes is at least $L^q_A$. Thus, $P^{A,q}_i+P^{B,q}_j \neq P^{A,q}_k+P^{B,q}_l$.
		\end{enumerate}
	\end{proof}
	
	Note that Lemma \ref{lemma:power}.\ref{lemma:power_state_2}) and \ref{lemma:power}.\ref{lemma:power_state_3}) together equivalently mean that the set $\{P^{A,q}_i+P^{B,q}_j -L^q_AL^q_B : (i,j) \in \mathcal{L}^q_A\times \mathcal{L}^q_B \}$ is a permutation of the set $[L^q_AL^q_B]$. This interpretation will be useful when discussing the optimization of the power assignment. In the following section, we shall provide a motivating example of how to use task and power assignments to create FCSA codes.

	\subsection{A Motivating Example}\label{subsec:motivating_ex}
	To provide some insight into how FCSA codes are constructed, we shall work through a small motivating example. Consider the VCDBMM problem with $\mathcal{S}= \{(1,1),(1,2),(2,2),(2,3)\}$. This problem can be summarized by the bi-adjacency matrix in Fig. \ref{fig:small_ex} which also provides the task and power assignment.

	\begin{figure}[t]
	\begin{center}
	\begin{tabular}{l|c|| c | c | c |}
		\cline{3-5}
		\multicolumn{1}{c}{} &\multicolumn{1}{c||}{}  &	$\mathbf{B}_1$ & $\mathbf{B}_2$ & $\mathbf{B}_3$  \\\cline{2-5}
		& \diagbox{$P^{A,q}$}{\vspace{-1em}\\ $P^{B,q}$}&	 
		\textcolor{group3}{2},\textcolor{group2}{0}&  
		\textcolor{group3}{1},\textcolor{group2}{1}& 
		\textcolor{group3}{0},\textcolor{group2}{2}\\\hline\hline
		\multicolumn{1}{|c|}{$\mathbf{A}_1$} & \textcolor{group3}{2},\textcolor{group2}{0}
		&\tikzmark{top left 1}1 & 1\tikzmark{bottom right 1} & 0 \\\hline
		\multicolumn{1}{|c|}{$\mathbf{A}_2$} & \textcolor{group3}{0},\textcolor{group2}{2}
	 & 0 &\tikzmark{top left 2}1 & 1 \tikzmark{bottom right 2} \\	\hline
	\end{tabular}
	\DrawBox[ultra thick, group3]{top left 1}{bottom right 1}
	\DrawBox[ultra thick, group2]{top left 2}{bottom right 2}
			\tikz[overlay,remember picture] {
		\draw[<-,>=stealth] (bottom right 1) -- ++(0:3cm) node[draw, fill=white,text= group3, pos=1]{\scriptsize Group 1};
		\draw[<-,>=stealth] (bottom right 2) -- ++(0:2cm) node[draw, fill=white,text = group2, pos=1]{\scriptsize Group 2};
	}
	\end{center}
	\caption{A small example of FCSA code with a given task and power assignment. Task groupings are delineated by different colors such that if an element in row $i$ (column $j$) belongs to a color, then $i \in \mathcal{L}^q_A (j \in \mathcal{L}^q_B)$ for the  group represented by the color. Results in a recovery threshold of $R_{FCSA} = 5$.} \label{fig:small_ex}
	\end{figure}

	Let the elements $f_1,f_2 \in \mathbb{F}$ be associated with task group 1 and 2, respectively. Using the power assignments in Fig. \ref{fig:small_ex}, we create the following encoding polynomials:
	\begin{equation}
	\begin{split}
	\Theta(x) &= (x-f_1)^{L^1_AL^1_B} (x-f_2)^{L^2_AL^2_B}= (x-f_1)^2(x-f_2)^2,\\
	\mathbf{A}(x) &=\Theta(x) \sum_{i=1}^{L_A}\frac{\mathbf{A}_i}{(x-f_1)^{P^{A,1}_i} (x-f_2)^{P^{A,2}_i}} = \Theta(x) \left(\frac{\mathbf{A}_1}{(x-f_1)^2} +  \frac{\mathbf{A}_2}{(x-f_2)^2}\right), \\
	\mathbf{B}(x) &= \sum_{j=1}^{L_B}\frac{\mathbf{B}_j}{(x-f_1)^{P^{B,1}_i} (x-f_2)^{P^{B,2}_i}} =\frac{\mathbf{B}_1}{(x-f_1)^2} +  \frac{\mathbf{B}_2}{(x-f_1)(x-f_2)} +  \frac{\mathbf{B}_3}{(x-f_2)^2}. \\
	\end{split}
	\end{equation}
	Recall that $L^q_A$ and $L^q_B$ are the number of $\mathbf{A}_i$ and $\mathbf{B}_i$ matrices in group $q$, respectively. Now, we transmit to the $k^{\text{th}}$ worker the evaluations $\mathbf{A}(x_k)$ and $\mathbf{B}(x_k)$ for a unique element $x_k \in \mathbb{F}$ different from $f_1$ and $f_2$. Since the worker multiplies the matrices it receives, the output of each worker is $\mathbf{A}(x_k)\mathbf{B}(x_k)$. We now drop the subscript on $x_k$ for convenience. Note that the output of a worker is an evaluation of the matrix function  $\mathbf{A}(x)\mathbf{B}(x)$ which can be expanded as follows: 
	\begin{align}
		&\mathbf{A}(x)\mathbf{B}(x) \nonumber \\
		&= \mathbf{A}_1\mathbf{B}_1\left(\frac{\Theta(x)}{(x-f_1)^4}\right) + \mathbf{A}_2\mathbf{B}_1\left(\frac{\Theta(x)}{(x-f_1)^2(x-f_2)^2}\right)  \nonumber\\
		&+ \mathbf{A}_1\mathbf{B}_2\left(\frac{\Theta(x)}{(x-f_1)^3(x-f_2)}\right)  +\mathbf{A}_2\mathbf{B}_2\left(\frac{\Theta(x)}{(x-f_1)(x-f_2)^3}\right)  \nonumber\\
		&+ \mathbf{A}_1\mathbf{B}_3\left(\frac{\Theta(x)}{(x-f_1)^2(x-f_2)^2}\right)  +\mathbf{A}_2\mathbf{B}_3\left(\frac{\Theta(x)}{(x-f_2)^4}\right) \\
		&= \mathbf{A}_1\mathbf{B}_1\frac{(x-f_2)^2}{(x-f_1)^2} + \mathbf{A}_2\mathbf{B}_1  
		+ \mathbf{A}_1\mathbf{B}_2\frac{(x-f_2)}{(x-f_1)}  +\mathbf{A}_2\mathbf{B}_2\frac{(x-f_1)}{(x-f_2)} 
		+ \mathbf{A}_1\mathbf{B}_3 +\mathbf{A}_2\mathbf{B}_3\frac{(x-f_1)^2}{(x-f_2)^2}.  \label{eq:small_ex_1}
	\end{align}

	Note that the coefficients of the rational terms are the matrix products in the task assignments. Additionally, note that the coefficients for the terms with $\frac{1}{x-f_q}$ are the matrix products associated with group $q$. This is due to Lemma \ref{lemma:power} by noting that the powers of the terms in the denominators of the rational function associated with $\mathbf{A}_i\mathbf{B}_j$ are $P^{A,q}_i+P^{B,q}_i$ for that associated group element $f_q$. Thus, for the matrix products not in the task assignment, the denominator has terms with smaller powers than the terms in $\Theta(x)$ and, thus, become polynomial terms. Moreover, the matrix products within group $q$ are each associated with a different power for their denominator. For example, the following terms are associated with group $1$:  $\mathbf{A}_1\mathbf{B}_1\frac{(x-f_2)^2}{(x-f_1)^2} +
	 \mathbf{A}_1\mathbf{B}_2\frac{(x-f_2)}{(x-f_1)}$.
	 
	We can now continue to simplify to get
	\begin{equation}
	\begin{split}
		\mathbf{A}(x)\mathbf{B}(x) &=  \mathbf{A}_1\mathbf{B}_1\left(\frac{e_{1,1,1}}{(x-f_1)^2}+\frac{e_{1,1,2}}{(x-f_1)}+ z_{1,1,1}\right)  +\mathbf{A}_1\mathbf{B}_2\left(\frac{e_{1,2,1}}{(x-f_1)} +z_{1,2,1}\right)  \\
		&+ \mathbf{A}_2\mathbf{B}_2\left(\frac{e_{2,2,1}}{(x-f_2)}+z_{2,2,1}\right)
		+\mathbf{A}_2\mathbf{B}_3\left(\frac{e_{2,3,1}}{(x-f_2)^2}+\frac{e_{2,3,2}}{(x-f_2)}+z_{2,3,1}\right)   \\
		&+	\mathbf{A}_2\mathbf{B}_1 +  \mathbf{A}_1\mathbf{B}_3, 
	\end{split} \label{eq:small_ex_2}
	\end{equation}
	\begin{align}
= \frac{e_{1,1,1}\mathbf{A}_1\mathbf{B}_1}{(x-f_1)^2}
+\frac{e_{1,2,1}\mathbf{A}_1\mathbf{B}_2+e_{1,1,2}\mathbf{A}_1\mathbf{B}_1}{(x-f_1)} 
+\frac{e_{2,3,1}\mathbf{A}_2\mathbf{B}_3}{(x-f_2)^2}
+\frac{e_{2,2,1}\mathbf{A}_2\mathbf{B}_2+e_{2,3,2}\mathbf{A}_2\mathbf{B}_3}{(x-f_2)}  +	\mathbf{I}_1, \label{ex:small_ex_2}
	\end{align}
	where Eq. \eqref{eq:small_ex_2} comes from partial fraction decomposition and $\mathbf{I}_1 = \mathbf{A}_2\mathbf{B}_1 +  \mathbf{A}_1\mathbf{B}_3$. Note that $e_{1,1,1},e_{1,2,1},e_{2,2,1},e_{2,3,1}$ are all non-zero by partial fraction decomposition otherwise we could have simplified the equations further. 
	 By Lemma \ref{lemma:cauchy_vandermonde}, this matrix function can be interpolated from $5$ evaluation points and we can recover the following coefficients:
 	\begin{equation}
	 \begin{split}
	 \{e_{1,1,1}\mathbf{A}_1\mathbf{B}_1,e_{1,2,1}\mathbf{A}_1\mathbf{B}_2+e_{1,1,2}\mathbf{A}_1\mathbf{B}_1,
	 e_{2,3,1}\mathbf{A}_2\mathbf{B}_3,
	 e_{2,2,1}\mathbf{A}_2\mathbf{B}_2+e_{2,3,2}\mathbf{A}_2\mathbf{B}_3\}.
	 \end{split} 
	 \end{equation}
	 We observe that the matrix products within a group are now encoded into a triangular system of linear equations. Since the leading terms $e_{1,1,1},e_{1,2,1},e_{2,2,1},e_{2,3,1}$ are non-zero, the triangular systems are invertible. Hence, we are able to recover $\mathbf{A}_1\mathbf{B}_1,\mathbf{A}_1\mathbf{B}_2,\mathbf{A}_2\mathbf{B}_3, \mathbf{A}_2\mathbf{B}_2$ which are exactly the desired terms in the computation list $\mathcal{S}$. The recovery threshold for this scheme is $5$ since only $5$ worker outputs are necessary to interpolate $\mathbf{A}(x)\mathbf{B}(x)$. In comparison, the schemes discussed in Section \ref{subsec:relevant_cons} require a recovery threshold of $\min(2|\mathcal{S}|-1,L_AL_B) = \min(7,6) = 6$.
	
	For a more complex example, please refer to Fig. \ref{fig:example} where we demonstrate an example task and power assignment that results in a smaller recovery threshold in comparison to codes discussed in Section \ref{subsec:relevant_cons}. In the next section, we provide the full code construction for FCSA codes.

	\subsection{Main Theorem}
	Now, we are ready to state and prove our main result. 
	\begin{figure}
		\centering
		\includegraphics[width=\linewidth]{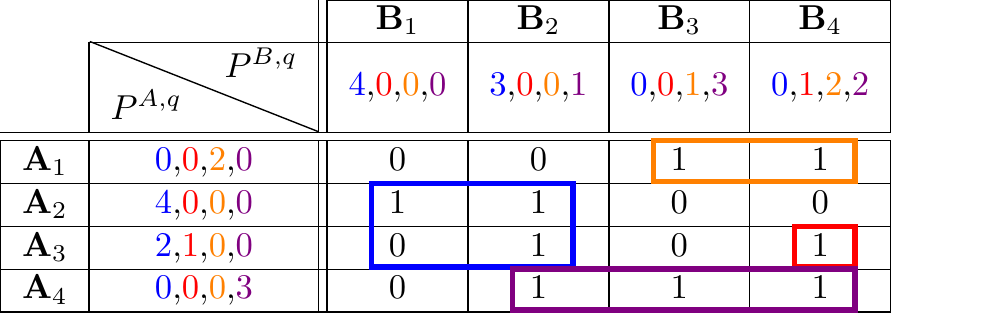}
		\caption{An example of a VCDBMM problem with a task and power assignment.
			Task groupings are delineated by different colors such that if an element in row $i$ (column $j$) belongs to a color, then $i \in \mathcal{L}^q_A (j \in \mathcal{L}^q_B)$ for the  group represented by the color. The power assignments are colored according to the task group it belongs to. By Theorem \ref{theorem:fcsa_codes}, $R_{FCSA}=15$. In contrast, the best recovery threshold using the schemes in Section \ref{subsec:relevant_cons} is $\min(16,17) = 16$. }  \label{fig:example}
		\vspace{-0.5cm}
	\end{figure}
	
	\begin{theorem}\label{theorem:fcsa_codes} (Achievability of FCSA codes)
		For a given computation list $\mathcal{S}$, assume that a valid task assignment $\mathcal{Q}$ and associated power assignment $\mathcal{P}$ are provided. Let $K$ be the number of workers. Then, assuming that $|\mathbb{F}| > |\mathcal{Q}| + K  $,  FCSA codes achieve the following:
		\hspace{-0.5em}
	
			\begin{align}
			&\text{Recovery Threshold: }  R_{FCSA} = 2\sum_{q=1}^{|\mathcal{Q}|}L^q_AL^q_B 
			- \min_{i\in [L_A]}\left(\sum_{q=1}^{|\mathcal{Q}|}P^{A,q}_{i}\right) 
			- \min_{j\in[L_B]}\left(\sum_{q=1}^{|\mathcal{Q}|}P^{B,q}_{j}\right) + 1, \\
			&\text{Upload Costs: }(U_A,U_B) =  \left(\alpha\beta,\beta\gamma\right),\\
			&\text{Download Cost: }  D_C = \alpha\gamma ,\\
			&\text{Encoding Complexities: }  \begin{cases}
			C_A  &=  \mathcal{O}\left(\alpha\beta KL_A\right),\\
			C_B  &= \mathcal{O}\left(\beta\gamma KL_B\right), \\
			\end{cases} \\
			&\text{Worker Complexity: }  C_w = \mathcal{O}\left(\alpha\beta\gamma\right) ,\\
			&\text{Decoding Complexity: }  C_d =  \widetilde{\mathcal{O}}(\alpha\gamma(R_{FCSA}\log^2(R_{FCSA}) + \sum_{q=1}^{|\mathcal{Q}|}(L^q_AL^q_B)^2)).
			\end{align}
	\end{theorem}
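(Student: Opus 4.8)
The plan is to generalize the motivating example of Section~\ref{subsec:motivating_ex} to an arbitrary valid pair $(\mathcal{Q},\mathcal{P})$. First I would fix the code: pick $|\mathcal{Q}|$ distinct elements $f_1,\dots,f_{|\mathcal{Q}|}\in\mathbb{F}$ (one per task group) and $K$ further distinct elements $x_1,\dots,x_K$ (possible since $|\mathbb{F}|>|\mathcal{Q}|+K$), set $\Theta(x)=\prod_{q=1}^{|\mathcal{Q}|}(x-f_q)^{L^q_AL^q_B}$, and define
\[
\mathbf{A}(x)=\Theta(x)\sum_{i=1}^{L_A}\frac{\mathbf{A}_i}{\prod_q(x-f_q)^{P^{A,q}_i}}=\sum_{i=1}^{L_A}\mathbf{A}_i\prod_q(x-f_q)^{L^q_AL^q_B-P^{A,q}_i},\qquad \mathbf{B}(x)=\sum_{j=1}^{L_B}\frac{\mathbf{B}_j}{\prod_q(x-f_q)^{P^{B,q}_j}},
\]
where Definition~\ref{def:power}.\ref{def:power_con_1}) ensures $\mathbf{A}(x)$ is a polynomial. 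Worker $k$ receives $\widebf{A}_k=\mathbf{A}(x_k)\in\mathbb{F}^{\alpha\times\beta}$, $\widebf{B}_k=\mathbf{B}(x_k)\in\mathbb{F}^{\beta\times\gamma}$ and returns $\widebf{C}_k=\mathbf{A}(x_k)\mathbf{B}(x_k)\in\mathbb{F}^{\alpha\times\gamma}$. Since $\widebf{A}_k$ (resp.\ $\widebf{B}_k$) is a scalar-weighted sum of the $L_A$ (resp.\ $L_B$) input matrices and $\widebf{C}_k$ is one $\alpha\times\beta$ by $\beta\times\gamma$ product, the upload costs $(\alpha\beta,\beta\gamma)$, download cost $\alpha\gamma$, worker complexity $\mathcal{O}(\alpha\beta\gamma)$, and encoding complexities $\mathcal{O}(\alpha\beta KL_A),\mathcal{O}(\beta\gamma KL_B)$ follow by bookkeeping over the $K$ workers.

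Next I would determine the partial-fraction structure of $\mathbf{A}(x)\mathbf{B}(x)=\sum_{i,j}\mathbf{A}_i\mathbf{B}_j\prod_q(x-f_q)^{L^q_AL^q_B-P^{A,q}_i-P^{B,q}_j}$ using Lemma~\ref{lemma:power}. If $(i,j)\in\mathcal{L}^q_A\times\mathcal{L}^q_B$ (a unique such $q$ by Definition~\ref{def:task}), then Lemma~\ref{lemma:power}.\ref{lemma:power_state_1}) gives exponent $-w_{ij}$ at $f_q$ with $w_{ij}:=P^{A,q}_i+P^{B,q}_j-L^q_AL^q_B\ge 1$, and exponents $\ge 0$ at all other $f_{q'}$; if $(i,j)$ lies in no task group, all exponents are $\ge 0$ and the term is a pure polynomial. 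Hence the only poles are at the $f_q$, and the pole part contributed by $(i,j)$ in group $q$ is $\mathbf{A}_i\mathbf{B}_j\,g_{ij}(x)/(x-f_q)^{w_{ij}}$ with $g_{ij}(x)=\prod_{q'\neq q}(x-f_{q'})^{L^{q'}_AL^{q'}_B-P^{A,q'}_i-P^{B,q'}_j}$ and, crucially, $g_{ij}(f_q)\neq 0$ since the $f_q$ are distinct. By Lemma~\ref{lemma:power}.\ref{lemma:power_state_2})--\ref{lemma:power}.\ref{lemma:power_state_3}), $\{w_{ij}:(i,j)\in\mathcal{L}^q_A\times\mathcal{L}^q_B\}=[L^q_AL^q_B]$, so $f_q$ is a pole of multiplicity exactly $L^q_AL^q_B$.

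I would then count the unknowns. Writing $\mathbf{A}(x)\mathbf{B}(x)=\mathbf{A}(x)\big(\Theta(x)\mathbf{B}(x)\big)/\Theta(x)$, where $\Theta\mathbf{B}=\sum_j\mathbf{B}_j\prod_q(x-f_q)^{L^q_AL^q_B-P^{B,q}_j}$ has degree $\sum_qL^q_AL^q_B-\min_j\sum_qP^{B,q}_j$ and $\deg\mathbf{A}=\sum_qL^q_AL^q_B-\min_i\sum_qP^{A,q}_i$, the polynomial part of $\mathbf{A}(x)\mathbf{B}(x)$ has degree at most $\sum_qL^q_AL^q_B-\min_i\sum_qP^{A,q}_i-\min_j\sum_qP^{B,q}_j$. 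Thus $\mathbf{A}(x)\mathbf{B}(x)$ has the Cauchy--Vandermonde form of Lemma~\ref{lemma:cauchy_vandermonde} with $N=|\mathcal{Q}|$, multiplicities $u_q=L^q_AL^q_B$, $M=\sum_qL^q_AL^q_B$, and at most $M-\min_i\sum_qP^{A,q}_i-\min_j\sum_qP^{B,q}_j+1$ polynomial coefficients, i.e.\ at most $R_{FCSA}$ matrix coefficients in all; Lemma~\ref{lemma:cauchy_vandermonde} then interpolates all of them from any $R_{FCSA}$ worker outputs. To extract the targets, fix $q$ and let $(i_r,j_r)$ be the unique pair with $w_{i_rj_r}=r$ for $r\in[L^q_AL^q_B]$; collecting residues at $f_q$, the interpolated coefficient of $(x-f_q)^{-r}$ is $\mathbf{E}_{q,r}=g_{i_rj_r}(f_q)\mathbf{A}_{i_r}\mathbf{B}_{j_r}+\sum_{r'>r}c_{r',r}\mathbf{A}_{i_{r'}}\mathbf{B}_{j_{r'}}$ for scalars $c_{r',r}$, a triangular system that is invertible because $g_{i_rj_r}(f_q)\neq 0$. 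Back-substitution from $r=L^q_AL^q_B$ downward recovers every in-group product, hence every $\mathbf{A}_i\mathbf{B}_j$ with $(i,j)\in\mathcal{S}\subseteq\bigcup_q\mathcal{L}^q_A\times\mathcal{L}^q_B$, for \emph{every} realization of $\mathcal{A},\mathcal{B}$. The decoding cost is a structured Cauchy--Vandermonde interpolation, $\widetilde{\mathcal{O}}(R_{FCSA}\log^2R_{FCSA})$ per scalar entry, plus $\mathcal{O}\big(\sum_q(L^q_AL^q_B)^2\big)$ per entry for the $|\mathcal{Q}|$ triangular solves, over $\alpha\gamma$ entries, yielding $C_d$.

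The step I expect to be the main obstacle is making the two structural claims of the second paragraph airtight via Lemma~\ref{lemma:power}: that partial-fraction decomposition makes the interpolated coefficients decouple perfectly into per-group systems involving only \emph{in-group} products (no undesired product leaks into any rational coefficient), and that each such system is triangular with nonvanishing diagonal $g_{i_rj_r}(f_q)$, so decoding succeeds for arbitrary rather than merely generic matrices. Once this cross-subspace alignment is established, the degree count producing $R_{FCSA}$ and the remaining cost tallies are routine.
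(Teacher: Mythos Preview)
Your proposal is correct and follows essentially the same approach as the paper's proof: the same encoding via $\Theta(x)$, $\mathbf{A}(x)$, $\mathbf{B}(x)$, the same split of $\mathbf{A}(x)\mathbf{B}(x)$ into in-group and out-of-group terms using Lemma~\ref{lemma:power}, the same invocation of Lemma~\ref{lemma:cauchy_vandermonde}, and the same triangular back-substitution with nonzero diagonal $g_{i_rj_r}(f_q)$ (the paper's $\zeta_{i,j,q,0}=\Delta_{i,j,q}(0)\neq 0$). The only stylistic difference is that you bound the polynomial-part degree by the global numerator/denominator degree count $\deg\mathbf{A}+\deg(\Theta\mathbf{B})-\deg\Theta$, whereas the paper takes a term-by-term maximum; both give the same value and the rest is identical.
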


	\begin{proof}
		
	First, let $f_1,f_2,\dots, f_{|\mathcal{Q}|},x_1,x_2,\dots,x_K$ be distinct elements from $\mathbb{F}$. For all $i \in [L_A]$, $j\in [L_B]$, we define 
	
	\vspace{-1em}
	\begin{equation}
	a_i(x) = \prod_{q=1}^{|\mathcal{Q}|}(x-f_{q})^{P^{A,q}_{i}},
	b_j(x) =\prod_{q=1}^{|\mathcal{Q}|}(x-f_{q})^{P^{B,q}_{j}} .
	\end{equation}
	
	Note that the degrees of the polynomials of $a_i(x)$ and $b_j(x)$ are $\deg(a^i_r(x)) = \sum_{q=1}^{|\mathcal{Q}|}P^{A,q}_{i}$ and $\deg(b^j_r(x)) = \sum_{q=1}^{|\mathcal{Q}|}P^{B,q}_{j}$. Additionally, note that $a_i(x)$ and $b_j(x)$ share exactly one root $f_{q}$ if $(i,j) \in L^q_A\times L^q_B$, i.e., $P^{A,q}_i$ and $P^{B,a}_j$ are both non-zero; otherwise $a_i(x)$ and $b_j(x)$ do not share a root. If $a_i(x)$ and $b_j(x)$ share a root, we shall denote this shared root as $f_{i,j}$.
	
	Given $q \in [|\mathcal{Q}|] $ and  $(i,j) \in \mathcal{L}^q_A\times \mathcal{L}^q_B$, we define 
	\begin{align}
	\Delta_{i,j,q}(x) =  \prod_{p\in[|\mathcal{Q}|]: p \neq q} (x+(f_{q}-f_{p}))^{L^p_AL^p_B-P^{A,p}_i-P^{B,p}_j}.
	\end{align}
	Note that this product is over all the groups in which $(i,j)$ does not belong to, i.e., $(i,j) \notin \mathcal{L}^p_A\times \mathcal{L}^p_B$.
	Additionally, observe that for $p \neq q$ then either $P^{A,p}_i =0$ or $P^{B,p}_i =0$ which guarantees, due to Lemma \ref{def:power}, that 
	$P^{A,p}_i + P^{B,p}_i \leq L^p_AL^p_B$ 
	for all $(i,j,p)$ tuples used in $\Delta_{i,j,q}(x)$. Thus, $\Delta_{i,j,q}(x)$ must be a polynomial and we can write $\Delta_{i,j}(x) = \sum_{l=0}^{D_{i,j,q}}\zeta_{i,j,q,l}x^l$ where 
	$D_{i,j,q}=\sum_{p=1, p\neq q}^{|\mathcal{Q}|}L^p_AL^p_B-P^{A,p}_i-P^{B,p}_j$ 
	is the degree of $\Delta_{i,j,q}(x)$ and $\zeta_{i,j,q,l} \in \mathbb{F}$ are the coefficients of $\Delta_{i,j,q}(x)$.
	
	We define
	\vspace{-0.3em}
	\begin{equation}
	\Theta(x) = \prod_{q=1}^{|\mathcal{Q}|}(x-f_{q})^{L^q_AL^q_B} .
	\end{equation}
	\vspace{-0.3em}
	Note that $\Theta(x)$ has degree $\sum_{q=1}^{|\mathcal{Q}|}L^q_AL^q_B$. Sources $\mathbb{A}$ and  $\mathbb{B}$  use the following  polynomials, respectively, to encode the matrices
	
	\vspace{-0.5em}
	\begin{equation}
	\mathbf{A}(x) = \Theta(x) \sum_{i=1}^{L_A}\frac{\mathbf{A}_i}{a_i(x)} , \mathbf{B}(x) = \sum_{j=1}^{L_B}\frac{\mathbf{B}_j}{b_i(x)} . \label{eq:encoding}
	\end{equation}
	\vspace{-0.5em}
	
	As such, for the $k^{\text{th}}$ worker node, the source nodes transmit the evaluations of the encoding functions at $x_k$, i.e., $\widebf{A}_k = \mathbf{A}(x_k)$ and $\widebf{B}_k = \mathbf{B}(x_k)$. Thus, the worker output is $\widebf{C}_k = \mathbf{A}(x_k)\mathbf{B}(x_k)$. We now drop the subscript on $x_k$ for ease of notation. 
	The resulting computation from a worker is then an evaluation of the following function
	
	\vspace{-0.6em}
	\begin{equation}
	\mathbf{A}(x)\mathbf{B}(x) = \sum_{i=1}^{L_A}\sum_{j=1}^{L_B}\frac{\Theta(x) }{a_i(x)b_j(x)}\mathbf{A}_i\mathbf{B}_j 
	\end{equation} 
	\vspace{-0.2cm}
	\begin{subequations}\label{eq:inter_all_pre}
		\begin{equation}
		\begin{split}
		&=  \sum_{(i,j) \notin  \cup^{|\mathcal{Q}|}_{q=1} \mathcal{L}^q_A\times \mathcal{L}^q_B }   \mathbf{A}_i\mathbf{B}_j
		 \prod_{q\in[|\mathcal{Q}|]}
		(x-f_{q})^{L^q_AL^q_B-P^{A,q}_i-P^{B,q}_j} \label{eq:inter__pre_a} 
		\end{split}
		\end{equation}
		\vspace{-0.5cm}
		\begin{equation} 
		\begin{aligned}
		&+  \sum_{q\in[|\mathcal{Q}|]} \sum_{(i,j) \in \mathcal{L}^q_A\times \mathcal{L}^q_B }   \bigg( \prod_{p\in[|\mathcal{Q}|]: p\neq q }(x-f_{p})^{L^p_AL^p_B-P^{A,p}_i-P^{B,p}_j} 
		\bigg)\frac{\mathbf{A}_i\mathbf{B}_j }{(x-f_{q})^{P^{A,q}_i+P^{B,q}_j - L^q_AL^q_B}}  \label{eq:inter__pre_b} 
		\end{aligned}
		\end{equation}
	\end{subequations}

	\begin{subequations}\label{eq:inter_all}
		\begin{equation}
		\begin{split}
		&=  \sum_{(i,j) \notin  \mathcal{E}}   \mathbf{A}_i\mathbf{B}_j
		\prod_{q\in[|\mathcal{Q}|]}
		(x-f_{q})^{L^q_AL^q_B-P^{A,q}_i-P^{B,q}_j} \label{eq:inter_a} 
		\end{split}
		\end{equation}
		\vspace{-0.5cm}
		\begin{equation} 
		\begin{aligned}
		&+ \sum_{(i,j) \in \mathcal{E} }  \mathbf{A}_i\mathbf{B}_j  \frac{\Delta_{i,j,q}(x-f_{q_{ij}})}{(x-f_{q_{ij}})^{P_{i,j}}} \label{eq:inter_b} 
		\end{aligned}
		\end{equation}
	\end{subequations}
	 where the last equality comes from the definition of $\Delta_{i,j,q}(x)$,  $\mathcal{E} = \cup^{|\mathcal{Q}|}_{q=1} \mathcal{L}^q_A\times \mathcal{L}^q_B $, and $P_{i,j} = P^{A,q}_i+P^{B,q}_i -L^q_AL^q_B$ where $q$ is the group for which $(i,j) \in \mathcal{L}^q_A\times \mathcal{L}^q_B$. $P_{i,j}$ is well-defined since the summation in Eq. \eqref{eq:inter_b} is over $(i,j) \in \mathcal{E}$. Note that $\mathcal{S} \subseteq \mathcal{E}$ by the definition of task assignment.
	Observe that $P_{i,j} \geq 0$ in Eq. \eqref{eq:inter_b} by Lemma \ref{lemma:power}.\ref{lemma:power_state_1}).
	  Conceptually, the first (Eq. \eqref{eq:inter_a}) and second (Eq. \eqref{eq:inter_b}) summations correspond to the computations not assigned and assigned to the task assignment $\mathcal{Q}$, respectively. Our goal is to extract the matrix coefficients in Eq. \eqref{eq:inter_b}.
	It is clear that Eq. \eqref{eq:inter_a} is a polynomial in $x$ with degree $\max_{(i,j) \notin \mathcal{E}}(\sum_{q=1}^{|\mathcal{Q}|}L^q_AL^q_B - \sum_{q=1}^{|\mathcal{Q}|}P^{A,q}_{i} - \sum_{q=1}^{|\mathcal{Q}|}P^{B,q}_{j})$.
	
	Now, let us consider the sum Eq. \eqref{eq:inter_b}. First, we look at the terms in the summation. For convenience, we fix $(i,j) \in \mathcal{E}$ and let $q$ be the group such that $(i,j) \in \mathcal{L}^q_A\times \mathcal{L}^q_B$. Thus, we get
	\begin{align}
	&  \mathbf{A}_i\mathbf{B}_j  \frac{\Delta_{i,j,q}(x-f_{q})}{(x-f_{q})^{P_{i,j}}} \\
	&= \mathbf{A}_i\mathbf{B}_j  \Big(\frac{\zeta_{i,j,q,0}}{(x-f_{q})^{P_{i,j}}} +
	\cdots+ \frac{\zeta_{i,j,q,P_{i,j}-1}}{(x-f_{q})} 
	 +\sum_{k = P_{i,j}}^{D_{i,j,q}} \zeta_{i,j,q,k}(x-f_{i,j,q})^{k-P_{i,j}} \Big) \label{eq:expansion_t2}
	\end{align}
	where Eq. \eqref{eq:expansion_t2} comes from partial fraction decomposition. Note that the maximal value for $P_{i,j}$ is $L^q_AL^q_B$ by Lemma \ref{lemma:power}. We thus claim that Eq. \eqref{eq:inter_b} results in $\sum_{q=1}^{|\mathcal{Q}|} L^q_AL^q_B$ rational terms and a polynomial of degree  $\max_{(i,j) \in \mathcal{E}}(D_{i,j,q} -P_{i,j})=  \max_{(i,j) \in \mathcal{E}} (\sum_{q=1}^{|\mathcal{Q}|}L^q_AL^q_B - \sum_{q=1}^{|\mathcal{Q}|}P^{A,q}_{i} - \sum_{q=1}^{|\mathcal{Q}|}P^{B,q}_{j})$. 
	
	From Eq. \eqref{eq:expansion_t2} and Eq. \eqref{eq:inter_a}, we see that Eq. \eqref{eq:inter_all} creates 
	$ \sum_{q=1}^{|\mathcal{Q}|}L^q_AL^q_B - \min_{i\in [L_A]}\left(\sum_{q=1}^{|\mathcal{Q}|}P^{A,q}_{i}\right) - \min_{j\in [L_B]}\left (\sum_{q=1}^{|\mathcal{Q}|}P^{B,q}_{j}\right) + 1$ polynomial terms and $\sum_{q=1}^{|\mathcal{Q}|}L^q_AL^q_B$ rational terms. 
	
	Let $T = \sum_{q=1}^{|\mathcal{Q}|}L^q_AL^q_B - \min_{i\in [L_A]}\left(\sum_{q=1}^{|\mathcal{Q}|}P^{A,q}_{i}\right) - \min_{j\in [L_B]}\left (\sum_{q=1}^{|\mathcal{Q}|}P^{B,q}_{j}\right )$. Applying the expression from Eq. \eqref{eq:expansion_t2} and combining the two expressions in \eqref{eq:inter_all}, we arrive at the following equation for $\mathbf{A}(x)\mathbf{B}(x)$:
	\begin{equation}
	\mathbf{A}(x)\mathbf{B}(x) = \sum_{i=0}^{T}I_rx^{r} +  \hspace{-1.5em}\sum_{\substack{q\in |\mathcal{Q}| \\(i,j) \in \mathcal{L}^q_A\times \mathcal{L}^q_B}}  \hspace{-1.5em}\frac{1}{(x-f_{q})^{P_{i,j}}}\hspace{-1em}\sum_{\substack{(k,l) \in \mathcal{L}^q_A\times \mathcal{L}^q_B \\ P_{k,l} \geq P_{i,j}}}
	\hspace{-1.5em}\zeta_{k,l,q,P_{k,l}-P_{i,j} } \mathbf{A}_k\mathbf{B}_l  \label{eq:fcsa_t2_final}
	\end{equation}
	where $\{I_i: i \in \{0,1,\dots,T\}\}$  are arbitrary interference terms. Note that by Lemma \ref{lemma:power}, $P_{k,l} = P_{i,j}$ only when $k = i$ and $l = j$.
	
	By Lemma \ref{lemma:cauchy_vandermonde}, $\mathbf{A}(x)\mathbf{B}(x)$ can be interpolated from \begin{equation}
	R_{FCSA}=2\sum_{q=1}^{|\mathcal{Q}|}L^q_AL^q_B  
	- \min_{i\in [L_A]}\left(\sum_{q=1}^{|\mathcal{Q}|}P^{A,q}_{i}\right) 
	- \min_{j\in[L_B]}\left(\sum_{q=1}^{|\mathcal{Q}|}P^{B,q}_{j}\right) + 1
	\end{equation} evaluations. Thus, we can acquire the coefficients associated with the rational and polynomial terms. Now, we show that $\{\mathbf{A}_i\mathbf{B}_j: (i,j) \in S\}$ can be extracted from the coefficients of $\mathbf{A}(x)\mathbf{B}(x)$ which proves that $R_{FCSA}$ is the recovery threshold.
	
	We observe that in Eq. \eqref{eq:fcsa_t2_final} the coefficients of the rational terms with a pole at $f_{q}$ contain only the computations where $\{\mathbf{A}_i\mathbf{B}_j: (i,j) \in \mathcal{L}^q_A\times \mathcal{L}^q_B\}$ are involved. As such, we fix $q$ and focus on the subspace generated by the powers of $\frac{1}{(x-f_{q})}$. For convenience, let $d = L^q_AL^q_B$. We define the ordered index set $(i_1,j_1),(i_2,j_2),\dots,(i_d,j_{d})$ for the values of $\{P_{i,j}:(i,j) \in \mathcal{L}^q_A\times \mathcal{L}^q_B\}$ where $a \leq b \iff P_{i_a,j_a} < P_{i_b,j_b}$. Note that by Lemma \ref{lemma:power}, $P_{i,j}$'s take distinct values from $[L^q_AL^q_B]$ and, thus, $P_{i_{k+1},j_{k+1}} = P_{i_{k},j_{k}}+1$ with $P_{i_1,j_1} = 1$. Hence, we can write the rational terms associated with the root $f_{q}$ as 
	\begin{align}
	\sum_{k=1}^{d}\frac{1}{(x-f_{q})^{P_{i_k,j_k}}}\underbrace{\sum_{l=k}^{d}\zeta_{i_k,j_k,r,{l-k}}\mathbf{A}_{i_l}\mathbf{B}_{j_l}}_{\mathbf{Y}_{k}} .
	\end{align}
	
	Note that $\mathbf{Y}_k$ are the coefficients extracted from $\mathbf{A}(x)\mathbf{B}(x)$. In matrix notation, we write $\mathbf{Y}_k$ as 
	\vspace{-0.1cm}
	\begin{equation}
	\begingroup 
	\setlength\arraycolsep{2pt}
	\hspace{-0.2em}
	\begin{bmatrix}
	\mathbf{Y}_{1}\\
	\mathbf{Y}_{2}\\
	\vdots \\
	\mathbf{Y}_{d}
	\end{bmatrix} \hspace{-0.4em}=\hspace{-0.4em}
	\begin{bmatrix}
	\zeta_{{i_1,j_1,r,0}} & \zeta_{{i_2,j_2,r,1}} & \dots & \zeta_{{i_d,j_{d},r,d-1}}\\
	& \zeta_{{i_2,j_2,r,0}}  & \dots & \zeta_{{i_d,j_d,r,d-2}}\\
	&&\ddots & \vdots  \\
	&&& \zeta_{{i_{d},j_d,r,0}}\\
	\end{bmatrix}
	\hspace{-0.5em}
	\begin{bmatrix}
	\mathbf{A}^{i_1}_r\mathbf{B}^{j_1}_r\\
	\mathbf{A}^{i_2}_r\mathbf{B}^{j_2}_r\\
	\vdots \\
	\mathbf{A}^{i_d}_r\mathbf{B}^{j_d}_r\\
	\end{bmatrix}. \label{eq:upper_triangular}
	\endgroup
	\end{equation}
	To extract $\{\mathbf{A}_i\mathbf{B}_j:(i,j) \in \mathcal{L}^q_A\times \mathcal{L}^q_B\}$, we need to show that the upper-triangular matrix is invertible. It is sufficient to show that all elements along the diagonal are non-zero which is clearly true since $\zeta_{i,j,r,0} = \Delta_{i,j,r}(0) \neq 0$ due to $\Delta_{i,j,r}(x)$ having no roots at $0$. \footnote{We implore the reader to recall the motivating example in Section \ref{subsec:motivating_ex} as an instance of this problem.}
	
	By Definition \ref{def:task}, every $(i,j) \in \mathcal{S}$ is part of some $\mathcal{L}^q_A\times \mathcal{L}^q_B$. Thus, we are able to extract the matrices  $\{\mathbf{A}_i\mathbf{B}_j:(i,j) \in \mathcal{S}\}$ with the stated recovery threshold $R_{FCSA}$.

	Now, we discuss the system complexities. It is clear from the construction how we achieve the stated Upload Costs $(U_A,U_B)$, Download Cost $D_C$, and Worker Complexity $C_w$ by observing the sizes of the matrices $\widebf{A}_k$, $\widebf{B}_k$, and $\widebf{C}_k$. For the encoding complexity $C_A$, note that we only perform linear operations which can be summarized as performing $\alpha\beta$ matrix-vector operations with a $K \times L_A$ matrix which clearly achieves the desired encoding complexity $C_A$. Similar logic follows for $C_B$.  For the decoding complexity $C_d$, we first solve   $\alpha\gamma$  systems of  $R_{FCSA}$ linear equations defined by the interpolation problem in Eq.\eqref{eq:cauchy_van_function} which is known to have a complexity of $\mathcal{O}(\alpha\gamma R_{FCSA}\log^2(R_{FCSA})\log\log(R_{FCSA}))$ \cite{jia2021cross,YANG2002259,Olshevsky2000ASA,Gohberg1994FastAW}. The next stage in the decoding process involves solving $\alpha\gamma |\mathcal{Q}|$ system of $L^q_AL^q_B$ linear equations for every $q \in [|\mathcal{Q}|]$ defined by upper-triangular matrices in Eq. \eqref{eq:upper_triangular} which in total have a complexity of $\mathcal{O}(\alpha\gamma\sum_{q=1}^{|\mathcal{Q}|}(L^q_AL^q_B)^2)$. For decoding the triangular system of equations, we assume that the decoding is performed sequentially. Though, it is clear that we can speed-up the decoding by performing parallel inversions on the triangular systems. 
	Combining them all together results in the stated decoding complexity in Theorem \ref{theorem:fcsa_codes}. 
		\end{proof}
	
	\lev{
	\begin{remark}
		At a high-level, we can interpret Theorem \ref{theorem:fcsa_codes} as using the task assignment to delineate groups of computations that will be extracted, and that power assignment is used to align the terms to avoid interference with each other. Importantly, we note that the terms associated with the rational functions correspond to the desired computations in the task assignment and all the undesired computations are compacted within the polynomial terms. Interference alignment was first considered in \cite{jia2021cross}.  Unlike \cite{jia2021cross} where the problem space is static, we expand on this idea by incorporating power assignment which aims to minimize the number of polynomial terms to reduce their interference and, thus, allows for a smaller recovery threshold. Power assignment provides the necessary flexibility to account for the variable nature of VCDBMM.	\end{remark}
}
	\begin{remark}\label{remark:poly}
		A major motivating factor for constructing FCSA codes using task and power assignment is to be able to match or beat the recovery threshold of the codes defined in Section \ref{subsec:relevant_cons}. We will discuss important cases in Section \ref{section:fcsa_types} but we want to make a mention of one case for intellectual purposes. Consider the case when the task assignment only has one task grouping, i.e., the task grouping is the entire bipartite graph. Due to symmetry, it is not hard to determine that the best recovery threshold of FCSA codes for this case is $L_AL_B$ which is the same recovery threshold as for Polynomial codes defined in Section \ref{subsubsec:poly_codes}. Thus, FCSA codes can always achieve the recovery threshold of Polynomial codes. Though, in those cases, it is generally better to use Polynomial codes due to their faster decoding complexity by using fast polynomial interpolation. We mention this fact to show that FCSA codes are comparable to Polynomial codes in terms of the recovery threshold. In later sections, we will provided situations where FCSA codes are strictly better than the other codes discussed in Section \ref{subsec:relevant_cons}.
	\end{remark}
	\subsection{Power Assignment Optimization}
	We note that FCSA codes strongly depend on finding a good task  and  power assignments in a tractable manner, i.e., choosing $\mathcal{Q}$ and $\mathcal{P}$ to optimize the recovery threshold. Given a task assignment $\mathcal{Q}$, the optimization of the power assignment $\mathcal{P}$ to minimize the value of $R_{FCSA}$ can be straightforwardly designed as a binary linear program  \footnote{We note that this formulation as a BLP is an improvement over our conference paper \cite{tauz2021variable} where we had a binary quadratically constrained program with a linear objective. BLPs are generally much faster to optimize. }
	(BLP) by applying the conditions in Definition \ref{def:power}. The full optimization problem can be found in Appendix \ref{app:ilp}. While the formulation as a BLP does not guarantee a tractable method to find the optimal solution, we find that sub-optimal BLP solvers provide fairly good results \cite{Achterberg2013,mitchell2002branch,gurobi}. It is a possible future research direction to determine tractable methods to find the optimal power assignment. In subsequent sections, we will demonstrate some special cases of FCSA codes by providing simple methods to determine their task assignments. Despite being simple to construct, these special constructions provided significantly improved recovery thresholds in comparison to the previously discussed schemes. In the next section, we shall demonstrate how to generalize FCSA codes to allow flexible communication and computation complexity. 
	\section{FCSA with Flexible Communication and Computation}\label{sec:fcsa_flex}
	
	In this section, we demonstrate how we can use the construction of FCSA codes in Section \ref{sec:fcsa} to create a coding scheme with flexible communication and worker complexity which we term FCSA with Flexible Communication and Computation (FCSA+FCC). This new scheme is an extension of FCSA codes. The reason we provide the base construction of FCSA codes in Section \ref{sec:fcsa} before the more general version is to highlight the construction that provides the most gain over the other schemes in Section \ref{subsec:relevant_cons} for comparable communication and computational complexity, which will be demonstrated in later sections. Additionally, since we construct FCSA+FCC codes using FCSA codes as building blocks, preceding technical discussion carries over to this section and improves clarity.

	First, we present an important concept in the theory of fast matrix multiplication \cite{blaser2013FastMatrixMultiplication}.
	\begin{definition} \label{def:bilinear}
		(Bilinear Complexity \cite{blaser2013FastMatrixMultiplication}) Let $\mathbf{C} = (C_{j,k})^{k\in[n]}_{j\in[m]}$ be the matrix product of any matrices $\mathbf{A}= (A_{j,k})^{k\in[p]}_{j\in[m]}$ and $\mathbf{B}= (B_{j,k})^{k\in[n]}_{j\in[p]}$ where $C_{j,k} = \sum_{l=1}^{p}A_{j,l}B_{l,k}$. The \textit{bilinear complexity} of multiplying $\mathbf{A}$ and $\mathbf{B}$ is the minimum integer $R$, denoted as $R_{m,p,n}$, such that there exists tensors $a \in \mathbb{F}^{R\times m \times p}$,  $b \in \mathbb{F}^{R\times p \times n}$,  $c \in \mathbb{F}^{R\times m \times n}$ satisfying 
		\begin{align}
		\sum_{i=1}^{R}c_{i,j,k} & \underbrace{\left( \sum_{q=1}^{m}\sum_{r=1}^{p}a_{i,q,r}A_{q,r} \right)}_{\widetilde{A}_i}  \underbrace{\left(\sum_{r=1}^{p}\sum_{s=1}^{n}b_{i,r,s}B_{r,s}\right)}_{\widetilde{B}_i} \nonumber \\
		&= \sum_{l=1}^{p}A_{j,l}B_{l,k} = C_{j,k} \quad \forall j \in[m],k \in[n]. \label{eq:bilinear}
		\end{align}
	\end{definition}
	
	From this definition, we observe that bilinear complexity converts the problem of matrix multiplication into computing the element-wise product of two vectors of length $R_{m,p,n}$. Thus, by finding $\{\widetilde{A}_i\widetilde{B}_i\}^{R_{m,p,n}}_{i=1}$ we can recover the matrix product by applying the tensor $c_{i,j,k}$. While constructions of tensors known to achieve the optimal bilinear complexity do not yet exist for all $m,p,n$, there are many well known constructions that achieve an upper bound on the bilinear complexity such as Strassen's construction which provides an upper bound $R_{2^k,2^k,2^k} \leq 7^k$ \cite{strassen1969gaussian}. We remark that while the optimal $R_{m,p,n}$ is unknown for general $m,p,n$, $R_{m,p,n}$ is known to be sub-cubic in its parameters \cite{Yu2020EntangledPC}.
	
	Now, we can apply the concept of bilinear complexity to FCSA codes to achieve a trade-off between communication/computation complexity and recovery threshold. 
	
	Assume that we want to solve the VCDBMM problem with a computation list $\mathcal{S}$. Let $m,p,n \in \mathbb{Z}_{>}$ be fixed parameters that satisfy $m|\alpha$, $p|\beta$, and $n|\gamma$. For all $i \in [L_A]$ and $j\in [L_B]$, we partition the matrices in $\mathcal{A}$ and $\mathcal{B}$ as follows:
	\begin{equation}
	\mathbf{A}_i = \begin{bmatrix}
	\mathbf{A}_i^{1,1}& \cdots & \mathbf{A}_i^{1,p} \\
	\vdots& \ddots  & \vdots\\
	\mathbf{A}_i^{m,1} & \cdots & \mathbf{A}_i^{m,p}\\
	\end{bmatrix}
	, 
	\mathbf{B}_j =
	\begin{bmatrix}
	\mathbf{B}_j^{1,1}& \cdots & \mathbf{B}_j^{1,n} \\
	\vdots& \ddots  & \vdots\\
	\mathbf{B}_j^{p,1} & \cdots & \mathbf{B}_j^{p,n}\\
	\end{bmatrix}.
	\end{equation}
	
	Note that the partition $\mathbf{A}_i^{k,l}$ is of size $\frac{\alpha}{m} \times \frac{\beta}{p}$ for all $k \in [m],l\in [p]$ and $\mathbf{B}_j^{k,l}$ is of size $\frac{\beta}{p} \times \frac{\gamma}{n}$ for all $k \in [p],l\in [n]$. 
	
	As shown in Definition \ref{def:bilinear}, let $R = R_{m,p,n}$ be the bilinear complexity of a construction with tensors $a,b,c$ that satisfy Eg. \eqref{eq:bilinear}. Then, for all $ r \in [R], i\in [L_A], j\in [L_B]$,  we construct the encoding sub-matrices as 
	\begin{equation}
	\mathbf{A}^r_i =  \sum_{k=1}^{m}\sum_{l=1}^{p}a_{i,k,l}\mathbf{A}_i^{k,l}, \mathbf{B}^r_j =  \sum_{k=1}^{p}\sum_{l=1}^{n}b_{j,k,l}\mathbf{B}_j^{k,l}. \label{eq:bilinear_encoding}
	\end{equation}

	Thus, we can solve the VCDBMM problem by determining the matrix products $\{	\mathbf{A}_i^r\mathbf{B}_j^r: (i,j) \in \mathcal{S}, r \in [R]\}$ which is also a VCDBMM problem. An example of how to construct the new VCDBMM problem is shown in Fig. \ref{fig:bilinear_ex}. We observe that the new VCDBMM problem is essentially $R_{m,p,n}$ distinct VCDBMM problems. As such, by encoding the matrices using a construction for bilinear complexity, we can solve the resultant problem using FCSA codes. Afterward, we can decode the results using the values for $c_{i,j,k}$ and get the desired computations in $\mathcal{S}$. Thus, FCSA codes  can achieve flexible communication and computation complexity as shown in the following theorem.

	\begin{figure}
	\begin{center}
		\begin{tabular}{ c | c | c | c}
			\multicolumn{1}{c|}{}  &	$\mathbf{B}_1$ & $\mathbf{B}_2$ & $\mathbf{B}_3$ \\ \hline
 			$\mathbf{A}_1$ & \tikzmark{top left 1}1 & 1\tikzmark{bottom right 1} & 0  \\	\hline
			$\mathbf{A}_2$ & 0 & \tikzmark{top left 2}1 & 1\tikzmark{bottom right 2} \\  \hline
		\end{tabular} $\implies$
		\begin{tabular}{ c | c | c | c | c | c | c | c | c | c}
		\multicolumn{1}{c|}{}  &	$\mathbf{B}_1^1$ & $\mathbf{B}_2^1$ & $\mathbf{B}_3^1$  &	$\mathbf{B}_1^2$ & $\mathbf{B}_2^2$ & $\mathbf{B}_3^2$ &	$\mathbf{B}_1^3$ & $\mathbf{B}_2^3$ & $\mathbf{B}_3^3$\\ \hline
		$\mathbf{A}_1^1$ & 1 & 1 & 0 & 0 & 0 & 0 & 0 & 0 & 0\\	\hline
		$\mathbf{A}_2^1$ & 0 & 1 & 1 & 0 & 0 & 0& 0 & 0 & 0\\  \hline
		$\mathbf{A}_1^2$ & 0 & 0 & 0 & 1 & 1 & 0& 0 & 0 & 0\\  \hline
		$\mathbf{A}_2^2$ & 0 & 0 & 0 & 0 & 1 & 1& 0 & 0 & 0\\  \hline
		$\mathbf{A}_1^3$ & 0 & 0 & 0 & 0 & 0 & 0& 1 & 1 & 0 \\  \hline
		$\mathbf{A}_2^3$ & 0 & 0 & 0 & 0 & 0 & 0& 0 & 1 & 1\\  \hline
		\end{tabular}
	\end{center}
\caption{An example of using bilinear complexity to transform one VCDBMM problem into another VCDBMM problem. Assume that $R_{m,p,n}= 3$. \label{fig:bilinear_ex}}
	\end{figure}

		\begin{theorem}\label{theorem:fcsa_codes_flex} (Achievability of FCSA+FCC codes)
		For a given computation list $\mathcal{S}$, assume that a valid task assignment $\mathcal{Q}$ and associated power assignment $\mathcal{P}$ are provided for the original VCDBMM problem. Additionally, parameters $m,p,n \in \mathbb{Z}_{>}$ are provided such that $m|\alpha$, $p|\beta$, and $n|\gamma$. Let $R=R_{m,p,n}$ denote the bilinear complexity of multiplying an $m$-by-$p$ matrix and a $p$-by-$n$ matrix.  Furthermore, let $\rho$ be a parameter\footnote{We note that the case of $\rho =1$ was demonstrated in our conference paper \cite{tauz2021variable} .} such that $\rho|R$.  Then,  assuming that $|\mathbb{F}| > R_{m,p,n}|\mathcal{Q}| + K  $,  FCSA codes achieve the following:
		\hspace{-0.5em}
		\begin{align}
		&\text{Recovery Threshold: } \nonumber \\
		&R^{m,p,n}_{FCSA} = (R+\rho)\sum_{q=1}^{|\mathcal{Q}|}L^q_AL^q_B 
		- \min_{i\in [L_A]}\left(\sum_{q=1}^{|\mathcal{Q}|}P^{A,q}_{i}\right) 
		- \min_{j\in[L_B]}\left(\sum_{q=1}^{|\mathcal{Q}|}P^{B,q}_{j}\right) + 1 \\
		&\text{Upload Costs: }(U_A,U_B) =  \left(\frac{R}{\rho}\cdot \frac{\alpha\beta}{ mp},\frac{R}{\rho}\cdot \frac{\beta\gamma}{pn}\right)\\
		&\text{Download Cost: }  D_C = \frac{\alpha\gamma}{mn} \\
		&\text{Encoding Complexities: }  \begin{cases}
		C_A  &=  \mathcal{O}\left(\alpha\beta L_AR (\frac{K}{mp}+1)\right)\\
		C_B  &= \mathcal{O}\left(\beta\gamma L_BR(\frac{K}{np}+1)\right) \\
		\end{cases} \\
		&\text{Worker Complexity: }  C_w = \mathcal{O}\left(\frac{R}{\rho}\cdot \frac{ \alpha\beta\gamma}{mpn}\right) \\
		&\text{Decoding Complexity: }  C_d = \mathcal{O}(\alpha\gamma R|\mathcal{S}|)  + \widetilde{\mathcal{O}}(\frac{\alpha\gamma}{mn}(R^{m,p,n}_{FCSA}\log^2(R^{m,p,n}_{FCSA}) + R\sum_{q=1}^{|\mathcal{Q}|}(L^q_AL^q_B)^2)).
		\end{align}
	\end{theorem}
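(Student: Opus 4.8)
The plan is to present FCSA+FCC as the base construction of Theorem~\ref{theorem:fcsa_codes} wrapped by a bilinear-complexity outer code, so that the only genuinely new work is the reduction, the arrangement of the $R=R_{m,p,n}$ bilinear layers inside one FCSA-type polynomial, and the $\rho$-dependent block packing. First I would make the reduction precise: fixing tensors $a,b,c$ attaining $R_{m,p,n}$ (Definition~\ref{def:bilinear}), the coded sub-matrices $\mathbf A^r_i,\mathbf B^r_j$ of Eq.~\eqref{eq:bilinear_encoding} define a new VCDBMM instance on $[RL_A]\times[RL_B]$ whose computation list consists of the \emph{same-layer} pairs $(\mathbf A^r_i,\mathbf B^r_j)$, $(i,j)\in\mathcal S$, in the block-diagonal pattern of Fig.~\ref{fig:bilinear_ex}; by Eq.~\eqref{eq:bilinear}, contracting $\{\mathbf A^r_i\mathbf B^r_j\}_{r\in[R]}$ with $c$ returns every sub-block of $\mathbf A_i\mathbf B_j$, so solving this inherited instance suffices. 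The inherited instance carries the task/power assignment $(\mathcal Q,\mathcal P)$ of the original problem replicated across all $R$ layers, so every consequence of Lemma~\ref{lemma:power} holds layer by layer.

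For the inner code I would keep, per group $q$, the pole $f_q$ and the Cauchy--Vandermonde factors $a_i(x),b_j(x),\Theta(x)$ of Theorem~\ref{theorem:fcsa_codes}, but have each source transmit a block-row (resp.\ block-column) of $R/\rho$ coded sub-matrices, so that a worker's single product is a $\tfrac{\alpha}{m}\times\tfrac{\gamma}{n}$ matrix equal to the \emph{sum} over $R/\rho$ block positions of rational-function matrix products; this already fixes $(U_A,U_B)=(\tfrac{R}{\rho}\tfrac{\alpha\beta}{mp},\tfrac{R}{\rho}\tfrac{\beta\gamma}{pn})$, $D_C=\tfrac{\alpha\gamma}{mn}$, and $C_w=\mathcal O(\tfrac{R}{\rho}\tfrac{\alpha\beta\gamma}{mpn})$. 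The layer index $r$ is carried by additional numerator degrees in the $\mathbf A$- and $\mathbf B$-encodings, spaced using the power-assignment gaps of Lemma~\ref{lemma:power} so that: (i) each diagonal product $\mathbf A^r_i\mathbf B^r_j$ with $(i,j)\in\mathcal L^q_A\times\mathcal L^q_B$ lands in the coefficient of a distinct rational term $1/(x-f_q)^{\bullet}$, and (ii) every off-layer product $\mathbf A^r_i\mathbf B^{r'}_j$ with $r\ne r'$, together with every cross-group product, collapses into the polynomial part of $\mathbf A(x)\mathbf B(x)$, exactly as the cross-group terms do in Eq.~\eqref{eq:inter_a}.

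Then, expanding $\mathbf A(x)\mathbf B(x)$ by partial fractions as in Eqs.~\eqref{eq:inter_all}--\eqref{eq:fcsa_t2_final}, requirement (i) contributes $R\sum_q L^q_A L^q_B$ rational coefficients and requirement (ii) a polynomial of degree $\rho\sum_q L^q_A L^q_B-\min_i\sum_q P^{A,q}_i-\min_j\sum_q P^{B,q}_j$, so Lemma~\ref{lemma:cauchy_vandermonde} gives interpolation from $R^{m,p,n}_{FCSA}=(R+\rho)\sum_q L^q_A L^q_B-\min_i\sum_q P^{A,q}_i-\min_j\sum_q P^{B,q}_j+1$ worker outputs. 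As in Theorem~\ref{theorem:fcsa_codes}, for each pole $f_q$ and each layer the desired products are recovered from an upper-triangular system whose diagonal entries are the constants $\Delta_{i,j,q}(0)\ne0$ (nonzero since $\Delta_{i,j,q}$ has no root at $0$), hence invertible, and a final contraction with $c$ as in Step~1 returns $\{\mathbf A_i\mathbf B_j:(i,j)\in\mathcal S\}$. For the complexities, $C_A,C_B$ add the cost of Eq.~\eqref{eq:bilinear_encoding} ($\mathcal O(\alpha\beta L_A R)$ for the entrywise linear map, plus $\mathcal O(\tfrac{\alpha\beta L_A R K}{mp})$ for the $K$ encodings), and $C_d$ splits as the size-$R^{m,p,n}_{FCSA}$ Cauchy--Vandermonde solve on each of the $\tfrac{\alpha\gamma}{mn}$ scalar systems, the $R$ per-group triangular solves of total cost $\mathcal O(\tfrac{\alpha\gamma}{mn}R\sum_q(L^q_A L^q_B)^2)$, and the $\mathcal O(\alpha\gamma R|\mathcal S|)$ contraction with $c$.

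The main obstacle is requirement (ii) above: choosing the numerator-degree spacing for the $\rho$-round layer arrangement so that \emph{all} off-layer and cross-group products stay strictly within the polynomial budget $\rho\sum_q L^q_A L^q_B-\cdots$ without ever aliasing onto one of the $R\sum_q L^q_A L^q_B$ rational-coefficient slots or onto one another, while simultaneously respecting the block-row/block-column structure that forces each worker to observe only the sum over its $R/\rho$ block positions. This is exactly where the tight factor $(R+\rho)$ --- rather than the $2R$ one would get by running the base code $R$ times --- has to be earned, and where the interplay between the power-assignment gaps of Lemma~\ref{lemma:power} and the new spacing must be verified carefully.
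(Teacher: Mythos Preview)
Your high-level scaffolding is right and matches the paper: the bilinear tensors $(a,b,c)$ reduce $\mathcal S$ to $R=R_{m,p,n}$ copies of the same VCDBMM instance, the $R$ copies are packed into $\phi=R/\rho$ blocks that are horizontally/vertically concatenated so that a worker returns a single $\tfrac{\alpha}{m}\times\tfrac{\gamma}{n}$ matrix equal to the \emph{sum} over the $\phi$ block positions, and the savings $(R+\rho)$ rather than $2R$ come from letting the $\phi$ polynomial parts overlap while the rational parts stay disjoint. Your cost accounting ($U_A,U_B,D_C,C_w,C_A,C_B,C_d$) is also the paper's.

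Where you diverge --- and where the proposal has a genuine gap --- is the mechanism that separates the layers. You propose to ``keep, per group $q$, the pole $f_q$'' together with the base $a_i(x),b_j(x),\Theta(x)$, and to carry the layer index $r$ by ``additional numerator degrees.'' With that choice the pole order of $\mathbf A(x)\mathbf B(x)$ at $f_q$ is at most $L^q_AL^q_B$ (numerator factors can only \emph{lower} pole order), so you can never populate $R\sum_q L^q_AL^q_B$ distinct rational slots from only $|\mathcal Q|$ poles with the unchanged $\Theta$. The very obstacle you flag at the end is therefore not merely ``the hard step'' but is unachievable as written: off-layer products $\mathbf A^r_i\mathbf B^{r'}_j$ within the same block position share the pole $f_q$ with the desired ones and cannot be forced into the polynomial part by numerator shifts alone.

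The paper sidesteps this entirely by using $R|\mathcal Q|$ \emph{distinct} poles (this is exactly why the hypothesis reads $|\mathbb F|>R_{m,p,n}|\mathcal Q|+K$). Concretely, each partition $h\in[\phi]$ is a genuine base-FCSA instance on the task assignment obtained by taking $\rho$ disjoint copies of $\mathcal Q$, with fresh poles $f_{((h-1)\rho+(l-1))|\mathcal Q|+q}$ for $l\in[\rho]$, $q\in[|\mathcal Q|]$, and $\Theta_h(x)$ the product over those $\rho|\mathcal Q|$ factors. Then, by Theorem~\ref{theorem:fcsa_codes} applied verbatim to this enlarged task assignment, $\mathbf A^h(x)\mathbf B^h(x)$ has $\rho\sum_q L^q_AL^q_B$ rational terms (the cross-layer products $l\neq l'$ are now literally cross-\emph{group} products, hence polynomial by Lemma~\ref{lemma:power}.\ref{lemma:power_state_1}) and a polynomial of degree $\rho\sum_q L^q_AL^q_B-\min_i\sum_qP^{A,q}_i-\min_j\sum_qP^{B,q}_j$. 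Summing over $h\in[\phi]$, the rational spectra are disjoint (all $R|\mathcal Q|$ poles are distinct) so the rational count becomes $R\sum_q L^q_AL^q_B$, while the polynomial parts share the same degree bound, giving the stated $R^{m,p,n}_{FCSA}$ via Lemma~\ref{lemma:cauchy_vandermonde}. No new alignment argument is needed beyond the base theorem.
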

	
	\begin{proof}
		
		Assume that the input matrices are encoded as stated in Eq. \eqref{eq:bilinear_encoding}. Thus, we have $R_{m,p,n}$ individual VCDBMM problems. Let $\phi$ be defined such that $\rho\phi=R_{m,p,n}$. Partition the $R_{m,p,n}$ individual problems into $\phi$ groups with $\rho$ instances. For convenience, we can equivalently partition the set $[R_{m,p,n}]$ such that $\{1,2,\dots,\rho\}$ refer to the instances in the first partition, $\{\rho+1,\rho+2,\dots,2\rho\}$ refer to the instances in the second partition, and so forth. Additionally, we define $R=R_{m,p,n}$.
		
		Now, we encode each partition using the FCSA encoder. For clarity, we re-state the encoding process. First, let $f_1,f_2,\dots, f_{R|\mathcal{Q}|},x_1,x_2,\dots,x_K$ be distinct elements from $\mathbb{F}$. For all $i \in [L_A]$, $j\in [L_B]$, $h \in [\phi]$, $ l \in [\rho]$ we define 
		\begin{equation}
		a^{h,l}_i(x) = \prod_{q=1}^{|\mathcal{Q}|}(x-f_{((h-1)\rho+(l-1))|\mathcal{Q}|+q})^{P^{A,q}_{i}},
		b^{h,l}_j(x) =\prod_{q=1}^{|\mathcal{Q}|}(x-f_{((h-1)\rho+(l-1))|\mathcal{Q}|+q})^{P^{B,q}_{j}} .
		\end{equation}
		Additionally, for all $h \in[\phi]$ we define
		\begin{equation}
		\Theta_h(x) = \prod_{q=1}^{|\mathcal{Q}|}\prod_{l=1}^{\rho}(x-f_{((h-1)\rho+(l-1))|\mathcal{Q}|+q})^{L^q_AL^q_B} .
		\end{equation}
		For each partition $h \in [\phi]$, we define the encoding polynomials as 
		\begin{equation}
		\mathbf{A}^h(x) = \Theta_h(x) \sum_{i=1}^{L_A}\sum_{l=1}^{\rho}\frac{\mathbf{A}^{(h-1)\rho + l}_{i}}{a^{h,l}_i(x)} , \mathbf{B}^h(x) = \sum_{j=1}^{L_B}\sum_{l=1}^{\rho}\frac{\mathbf{B}^{(h-1)\rho + l}_{j}}{b^{h,l}_j(x)} . \label{eq:flex_encoding}
		\end{equation}

		For the $k^{\text{th}}$ worker, we send the following matrices:
		\begin{align}
		\widebf{A}_k = \begin{bmatrix}
		\mathbf{A}^1(x_k) & \mathbf{A}^2(x_k) & \cdots & \mathbf{A}^{\phi}(x_k)
		\end{bmatrix} , \widebf{B}_k = \begin{bmatrix}
		\mathbf{B}^1(x_k) \\ \mathbf{B}^2(x_k) \\ \vdots\\  \mathbf{B}^{\phi}(x_k)
		\end{bmatrix}
		\end{align} 
		where $\widebf{A}_k \in \mathbb{F}^{\frac{\alpha}{m}\times \phi\frac{\beta}{p}}$ and $\widebf{B}_k \in \mathbb{F}^{\phi\frac{\beta}{p}\times \frac{\gamma}{n}}$. Thus, the output $\widebf{C}_k  \in \mathbb{F}^{\frac{\alpha}{m}\times \frac{\gamma}{n}}$ of the $k^{\text{th}}$ worker is 
		\begin{equation}
		\widebf{C}_k  = \sum_{h = 1}^{\phi} \mathbf{A}^h(x) \mathbf{B}^h(x)
		\end{equation}

		From the proof of Theorem \ref{theorem:fcsa_codes}, we know that $ \mathbf{A}^h(x) \mathbf{B}^h(x)$ can be expressed as a summation of $\rho\sum_{q=1}^{|\mathcal{Q}|}L^q_AL^q_B $ rational terms and a polynomial of max degree $	\rho\sum_{q=1}^{|\mathcal{Q}|}L^q_AL^q_B  
		- \min_{i\in [L_A]}\left(\sum_{q=1}^{|\mathcal{Q}|}P^{A,q}_{i}\right) 
		- \min_{j\in[L_B]}\left(\sum_{q=1}^{|\mathcal{Q}|}P^{B,q}_{j}\right)$. Note that the rational terms all have unique roots in the denominator since each task assignment group within each VCDBMM instance is given a unique element $f_q$. Thus, $\sum_{h = 1}^{\phi} \mathbf{A}^h(x) \mathbf{B}^h(x)$ results in 
		\begin{equation}
		\sum_{h = 1}^{\phi}\rho\sum_{q=1}^{|\mathcal{Q}|}L^q_AL^q_B = \phi \rho\sum_{q=1}^{|\mathcal{Q}|}L^q_AL^q_B = R\sum_{q=1}^{|\mathcal{Q}|}L^q_AL^q_B
		\end{equation}
		unique rational terms and 
		\begin{equation}
		\rho\sum_{q=1}^{|\mathcal{Q}|}L^q_AL^q_B  
		- \min_{i\in [L_A]}\left(\sum_{q=1}^{|\mathcal{Q}|}P^{A,q}_{i}\right) 
		- \min_{j\in[L_B]}\left(\sum_{q=1}^{|\mathcal{Q}|}P^{B,q}_{j}\right)+1
		\end{equation}
		polynomial terms. By Lemma \ref{lemma:cauchy_vandermonde}, we can get the coefficients of the rational terms using $R^{m,p,n}_{FCSA}$ worker outputs. From this point, the proof continues analogously to the proof of Theorem \ref{theorem:fcsa_codes} to extract $\{	\mathbf{A}_i^r\mathbf{B}_j^r: (i,j) \in \mathcal{S}, r \in [R]\}$. Then, we use the relevant $c_{i,j,k}$ tensor for the bilinear complexity construction used in Eq. \eqref{eq:bilinear_encoding} to get $\{	\mathbf{A}_i\mathbf{B}_j: (i,j) \in \mathcal{S}\}$. Thus, we achieve the stated recovery threshold.
		
		We now discuss the other systems metrics. Again, we can easily calculate the Upload Costs $(U_A,U_B)$, Download Cost $D_C$, and Worker Complexity $C_w$ by observing the sizes of the matrices $\widebf{A}_k \in \mathbb{F}^{\frac{\alpha}{m}\times \phi\frac{\beta}{p}}$, $\widebf{B}_k \in \mathbb{F}^{\phi\frac{\beta}{p}\times \frac{\gamma}{n}}$, and $\widebf{C}_k  \in \mathbb{F}^{\frac{\alpha}{m}\times \frac{\gamma}{n}}$. Now, let use consider the encoding complexity of $C_A$. First, the complexity of applying the construction of bilinear complexity in Eq. \eqref{eq:bilinear_encoding} is $\mathcal{O}(\alpha\beta L_A R)$. Then, applying the linear encoding in Eq. \eqref{eq:flex_encoding} is $\mathcal{O}(\sum_{h = 1}^{\phi}\frac{\alpha\beta}{mp}L_A\rho K) = \mathcal{O}(\frac{\alpha\beta}{mp}L_A\phi\rho K) = \mathcal{O}(\frac{\alpha\beta}{mp}L_ARK)$. Combining these equations together, we get $C_A=  \mathcal{O}\left(\alpha\beta L_AR (\frac{K}{mp}+1)\right)$. Similar logic can be followed to get $C_B$. Finally, we consider the decoding complexity. Again, we can follow the steps in Theorem \ref{theorem:fcsa_codes} to understand that the decoding complexity of acquiring  $\{	\mathbf{A}_i^r\mathbf{B}_j^r: (i,j) \in \mathcal{S}, r \in [R]\}$ is $\widetilde{\mathcal{O}}(\frac{\alpha\gamma}{mn}(R^{m,p,n}_{FCSA}\log^2(R^{m,p,n}_{FCSA}) + R\sum_{q=1}^{|\mathcal{Q}|}(L^q_AL^q_B)^2))$. Afterwards, we apply the relevant $c_{i,j,k}$ tensor for the bilinear complexity construction used in Eq. \eqref{eq:bilinear_encoding} which has a complexity of $\mathcal{O}(\alpha\gamma R)$  for each matrix. Thus, the overall complexity of the final step is $\mathcal{O}(\alpha\gamma R|\mathcal{S}|)$. Combining these two terms together gets us the stated decoding complexity and completes the proof. 
	
		\lev{
		\begin{remark}
			The major novelty of this construction is that we can adjust the communication and computation costs using the parameters $m,p,n,$ and $\rho$. For example, the worker complexity is scaled by $\frac{R_{m,p,n}}{\rho m p n}$ and even if $\rho = 1$, $\frac{R_{m,p,n}}{\rho m p n}\leq 1$  due to $R_{m,p,n}$ being sub-cubic in its parameters. With the addition of $\rho$, we can control the worker complexity with a large array of terms between $\frac{1}{mpn}$ and $1$. Yet, this requires an appropriate increase in the recovery threshold to account for the reduction in communication and computation cost.  Additionally, we note that as $m,p,n$ increase, the dominant term in $R^{m,p,n}_{FCSA}$ becomes $(R_{m,p,n}+\rho)\sum_{q=1}^{|\mathcal{Q}|}L^q_AL^q_B $ which is independent of the power assignment. This indicates that the improvements offered by FCSA codes become less significant as the values for $m,p,n$ increase. Intuitively, this reduction happens because the bipartite graph of the new VCDBMM problem gets sparser for higher values of $R_{m,p,n}$, as can be seen in Fig. \ref{fig:bilinear_ex}. To allow for flexibility in the recovery threshold even for a high value of $R_{m,p,n}$, parameter $\rho$ can also be used to control the tradeoff between the varying metrics. For example, if $\rho= \frac{R_{m,p,n}}{2}$, then the communication and computation costs asymptotically go to $0$ while the dominant term in $R^{m,p,n}_{FCSA}$ becomes $(\frac{3R_{m,p,n}}{2})\sum_{q=1}^{|\mathcal{Q}|}L^q_AL^q_B $.
		\end{remark}
		}

	\end{proof}

	
	\section{Lower Bound on Optimal Recovery Threshold}\label{sec:lower_bound}
	
	To understand how well FCSA codes solve the VCDBMM problem, we shall provide a lower bound on the optimal recovery threshold. Our bound is stated for the general case of flexible communication and computation, i.e., when matrices are partitioned based on the values of $m,n,p$ which results in a specific download cost.
	\begin{theorem}\label{theorem:optimality}
		For a given $m,p,n,$ and $\mathcal{S}$, let $R^*_{\mathcal{S},m,p,n}$ be the optimal recovery threshold with a fixed download cost of $\frac{\alpha\gamma}{nm}$. Then, 
		\begin{equation}
		R^*_{\mathcal{S},m,p,n} \geq mn|\mathcal{S}|.
		\end{equation}
		
	\end{theorem}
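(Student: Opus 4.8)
The plan is a standard cut-set (dimension-counting) argument, with the one non-routine ingredient being a demonstration that the target products cannot be ``compressed.'' Fix any VCDBMM code with download cost $\frac{\alpha\gamma}{mn}$ and recovery threshold $R$; it suffices to prove $R \ge mn|\mathcal{S}|$. We may assume $R \le K$ (otherwise the code never decodes from any worker subset and the claim is vacuous), so pick any $\mathcal{R} \subseteq [K]$ with $|\mathcal{R}| = R$. By the definition of the recovery threshold, $d_{\mathcal{R}}\bigl(\{\widebf{C}_i : i \in \mathcal{R}\}\bigr) = \{\mathbf{A}_i\mathbf{B}_j : (i,j)\in\mathcal{S}\}$ for \emph{every} realization of $\mathcal{A},\mathcal{B}$; hence the target collection is a deterministic function of the tuple $(\widebf{C}_i)_{i\in\mathcal{R}}$. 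Since $\widebf{C}_i \in \mathbb{F}^{\widetilde{\alpha}_i \times \widetilde{\gamma}_i}$ and $\sum_{i\in\mathcal{R}}\widetilde{\alpha}_i\widetilde{\gamma}_i \le R\cdot D_C = R\frac{\alpha\gamma}{mn}$ by the download-cost constraint, this tuple — and therefore the set of distinct values of $\{\mathbf{A}_i\mathbf{B}_j : (i,j)\in\mathcal{S}\}$ attained over all realizations — takes at most $|\mathbb{F}|^{R\alpha\gamma/(mn)}$ values.

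The crux is to show that $\{\mathbf{A}_i\mathbf{B}_j : (i,j)\in\mathcal{S}\}$ actually attains all $|\mathbb{F}|^{|\mathcal{S}|\alpha\gamma}$ elements of $(\mathbb{F}^{\alpha\times\gamma})^{|\mathcal{S}|}$. I would exhibit a single choice of $\mathbf{A}_1,\dots,\mathbf{A}_{L_A}$ making the map $(\mathbf{B}_1,\dots,\mathbf{B}_{L_B}) \mapsto \{\mathbf{A}_i\mathbf{B}_j : (i,j)\in\mathcal{S}\}$ surjective: let $\mathbf{A}_i \in \mathbb{F}^{\alpha\times\beta}$ have columns $(i-1)\alpha+1,\dots,i\alpha$ equal to $\mathbf{I}_\alpha$ and all other columns zero, which is possible since $L_A\alpha \le \beta$ by the standing non-degeneracy assumption $\beta \ge \max(L_A\alpha, L_B\gamma)$. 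Then $\mathbf{A}_i\mathbf{B}_j$ is precisely the block consisting of rows $(i-1)\alpha+1,\dots,i\alpha$ of $\mathbf{B}_j$; for a fixed $j$ these blocks, as $i$ ranges over $N(j) := \{i : (i,j)\in\mathcal{S}\}$, occupy disjoint row ranges (and $d^B_j\alpha \le L_A\alpha \le \beta$), so for any prescribed targets $\mathbf{C}_{i,j}$ one can build $\mathbf{B}_j$ by placing $\mathbf{C}_{i,j}$ into rows $(i-1)\alpha+1,\dots,i\alpha$ and filling the remaining rows with zeros. Doing this independently for each $j\in[L_B]$ realizes the arbitrary collection $\{\mathbf{C}_{i,j} : (i,j)\in\mathcal{S}\}$. (Symmetrically, one could instead pin down the $\mathbf{B}_j$'s using $\beta \ge L_B\gamma$.)

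Combining the two bounds yields $|\mathbb{F}|^{|\mathcal{S}|\alpha\gamma} \le |\mathbb{F}|^{R\alpha\gamma/(mn)}$, hence $|\mathcal{S}|\alpha\gamma \le R\frac{\alpha\gamma}{mn}$, i.e. $R \ge mn|\mathcal{S}|$; minimizing over all codes with download cost $\frac{\alpha\gamma}{mn}$ gives $R^*_{\mathcal{S},m,p,n} \ge mn|\mathcal{S}|$. I expect the surjectivity step to be the only real obstacle — it is what certifies that the desired products genuinely carry $|\mathcal{S}|\alpha\gamma$ field symbols of information, and it is the sole place the hypotheses on $\beta$ enter; everything else is bookkeeping with the download-cost definition. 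Note the argument never invokes $p$ (consistent with the bound being independent of $p$) and requires no lower bound on $|\mathbb{F}|$.
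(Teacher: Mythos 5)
Your proof is correct and takes essentially the same cut-set approach as the paper. The only differences are cosmetic: the paper fixes $\mathcal{B}$ so that the horizontal concatenation $\widehat{\mathbf{B}}$ is full-rank and lets $\mathcal{A}$ be uniformly random, then runs an entropy argument ($H(C_{\mathcal{S}}) = |\mathcal{S}|\alpha\gamma\log_2 u$ plus $H(C_{\mathcal{S}}\mid C_{\mathcal{R}})=0$), whereas you pin down $\mathcal{A}$ to block-identity matrices using the dual inequality $\beta\ge L_A\alpha$ and argue surjectivity directly by counting; these two constructions are mirror images (you even note the symmetry yourself) and the counting versus entropy bookkeeping is equivalent over a finite alphabet.
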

	
	\begin{proof}
		See Appendix \ref{app:optimality_bound}. 
		The proof idea is showing the existence of a set of input matrices $\mathcal{A}$ and $\mathcal{B}$ that require $R^*_{\mathcal{S},m,p,n} \geq mn|\mathcal{S}|$. We accomplish this bound by fixing $\mathcal{B}$ to have full rank when its components are horizontally concatenated and let $\mathcal{A}$ be uniformly sampled from $\mathbb{F}$. We then apply a cut-set bound argument to show that the minimum number of worker outputs needed is $mn|\mathcal{S}|$ to guarantee enough symbols to recover the output.
	\end{proof}
	
	In the following section, we demonstrate special cases of FCSA codes and  use Theorem \ref{theorem:optimality} to show that these special cases achieve a recovery threshold within a multiplicative gap of $2$ for the case of $p=1$.  
	
	\section{Special Cases of FCSA codes} \label{section:fcsa_types}
	In this section, we consider special cases of FCSA codes by providing methods to construct task assignments. We analyze these cases and provide their relatively simpler expression for the recovery thresholds. Despite the simplicity of these constructions, they provide fairly significant improvements in terms of the recovery threshold in comparison to the relevant constructions in Section \ref{subsec:relevant_cons} as will be demonstrated in Section \ref{sec:sim}.
	
	Before describing these special cases of FCSA codes, we remind the reader of the graph theoretic notation defined in Section \ref{sec:prelim}. Specifically $d^A_i = |\{j: (i,j) \in \mathcal{S} \}|$ and $d^B_j = |\{i: (i,j) \in \mathcal{S} \}|$. Additionally, note that the task and power assignments only affect the recovery threshold and decoding complexity in Theorem \ref{theorem:fcsa_codes} and Theorem \ref{theorem:fcsa_codes_flex}. Thus, we will only mention these measures when discussing the special cases of FCSA codes.
	
	\subsection{Type-1 FCSA codes}
	The first special case is known as Type-1 FCSA (T1-FCSA) codes. These codes represent the worst-case upper bound on the recovery threshold. We now discuss how to construct T1-FCSA codes.  Let $|\mathcal{Q}| = |\mathcal{S}|$. Assume that we enumerate all the computations in $\mathcal{S}$. For $q \in [|\mathcal{S}|]$, let $\mathcal{L}^q_A = \{i_q\}$ and $\mathcal{L}^q_B = \{j_q\}$ which guarantees that the power assignments become $P^{A,q}_{i_q} = P^{B,q}_{j_q} = 1$. Essentially, we are assigning every computation in $\mathcal{S}$ to its own task grouping. An example of T1-FCSA codes is provided in Fig. \ref{fig:t1_ex}. By Theorem \ref{theorem:fcsa_codes_flex}, we have the following:
	\begin{theorem}
		(Type-1 FCSA+FCC codes)
		
		For T1-FCSA codes and parameters $m,p,n,$ and $\rho$, we have
		\begin{align}
		&R^{T1}_{FCSA} = (R_{m,p,n}+\rho)|\mathcal{S}|-\min_{i\in [L_A]: d^A_i\neq 0  }d^A_i - \min_{j\in[L_B]: d^B_i\neq 0 }d^B_j + 1,\label{eq:corollary_fcsa_t1_r} \\
		&C^{T1}_d =  \mathcal{O}(\alpha\gamma R|\mathcal{S}|)+ 
		\widetilde{\mathcal{O}}(\alpha\gamma R^{T1}_{FCSA}\log^2(R^{T1}_{FCSA})).
		\end{align}
	\end{theorem}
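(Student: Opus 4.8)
The plan is to recognize the Type-1 construction as a particular, automatically valid choice of task assignment $\mathcal{Q}$ and power assignment $\mathcal{P}$, and then to specialize Theorem \ref{theorem:fcsa_codes_flex} to it; there is no new coding machinery to introduce, so the work is almost entirely verification and arithmetic substitution.

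First I would check that the construction yields a legitimate pair $(\mathcal{Q},\mathcal{P})$ in the sense of Definitions \ref{def:task} and \ref{def:power}. Enumerating $\mathcal{S} = \{(i_1,j_1),\dots,(i_{|\mathcal{S}|},j_{|\mathcal{S}|})\}$ and setting $\mathcal{L}^q_A = \{i_q\}$, $\mathcal{L}^q_B = \{j_q\}$ makes each rectangle $\mathcal{L}^q_A \times \mathcal{L}^q_B$ the singleton $\{(i_q,j_q)\}$; since $\mathcal{S}$ is a set, its enumeration has distinct entries, so the rectangles are pairwise disjoint and their union covers each element of $\mathcal{S}$ exactly once, which is precisely Definition \ref{def:task}. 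For the powers, $L^q_A = L^q_B = 1$ forces condition \ref{def:power_con_2} into its single feasible instance, $P^{A,q}_{i_q} = L^q_A L^q_B - x + 1$ with $x = 1$ and $P^{B,q}_{j_q} = y L^q_A$ with $y = 1$, i.e.\ $P^{A,q}_{i_q} = P^{B,q}_{j_q} = 1$, all other coordinates being zero by condition \ref{def:power_con_1}; condition \ref{def:power_con_3} holds vacuously because each power vector has a single nonzero entry. Hence the construction is valid and, in particular, involves no optimization.

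Next I would evaluate the aggregates appearing in Theorem \ref{theorem:fcsa_codes_flex}. Because $L^q_A L^q_B = 1$ for every $q$ and $|\mathcal{Q}| = |\mathcal{S}|$, both $\sum_{q=1}^{|\mathcal{Q}|} L^q_A L^q_B$ and $\sum_{q=1}^{|\mathcal{Q}|} (L^q_A L^q_B)^2$ equal $|\mathcal{S}|$. The quantity $\sum_{q=1}^{|\mathcal{Q}|} P^{A,q}_i$ counts the task groups whose $\mathbf{A}$-index equals $i$, which by construction is the number of pairs of $\mathcal{S}$ incident to left vertex $i$, namely $d^A_i$; symmetrically $\sum_{q=1}^{|\mathcal{Q}|} P^{B,q}_j = d^B_j$. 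Since the system model prunes any matrix that does not appear in $\mathcal{S}$, $d^A_i \geq 1$ and $d^B_j \geq 1$ for all $i \in [L_A]$, $j \in [L_B]$, so $\min_{i \in [L_A]} \sum_q P^{A,q}_i = \min_{i \in [L_A]:\, d^A_i \neq 0} d^A_i$ and likewise for the $B$ side. Substituting these into the recovery-threshold formula of Theorem \ref{theorem:fcsa_codes_flex} gives $R^{m,p,n}_{FCSA} = (R_{m,p,n}+\rho)|\mathcal{S}| - \min_{i:\, d^A_i \neq 0} d^A_i - \min_{j:\, d^B_j \neq 0} d^B_j + 1 = R^{T1}_{FCSA}$.

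Finally, substituting $\sum_q (L^q_A L^q_B)^2 = |\mathcal{S}|$ into the decoding-complexity bound of Theorem \ref{theorem:fcsa_codes_flex} gives $C_d = \mathcal{O}(\alpha\gamma R|\mathcal{S}|) + \widetilde{\mathcal{O}}\big(\frac{\alpha\gamma}{mn}(R^{T1}_{FCSA}\log^2 R^{T1}_{FCSA} + R|\mathcal{S}|)\big)$; using $\frac{1}{mn} \leq 1$ absorbs the $\frac{\alpha\gamma}{mn} R|\mathcal{S}|$ summand into $\mathcal{O}(\alpha\gamma R|\mathcal{S}|)$ and bounds the remaining summand by $\widetilde{\mathcal{O}}(\alpha\gamma R^{T1}_{FCSA}\log^2 R^{T1}_{FCSA})$, which is exactly the claimed $C^{T1}_d$. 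The one step I would treat most carefully is the validity check in the degenerate regime $L^q_A = L^q_B = 1$ — confirming that Definition \ref{def:power} still pins down a unique power vector and that the two $\min$ expressions range over nonempty index sets — but this follows immediately from the definitions and the non-degeneracy assumptions on $\mathcal{S}$, leaving the remainder as pure substitution into Theorem \ref{theorem:fcsa_codes_flex}.
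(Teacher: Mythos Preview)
Your proposal is correct and follows essentially the same approach as the paper: specialize the general FCSA+FCC theorem to the singleton task assignment, identify $\sum_q P^{A,q}_i$ with $d^A_i$ (and symmetrically for $B$), and simplify. The only cosmetic difference is in the decoding-complexity step, where the paper observes that each task group is a singleton so the triangular-inversion stage disappears outright, whereas you bound $\sum_q (L^q_A L^q_B)^2 = |\mathcal{S}|$ and absorb; both routes immediately yield the stated $C^{T1}_d$.
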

	\begin{proof}
		
		Since the number $P^{A,q}_{i}$ where $P^{A,q}_{i}  = 1$ is the number of $j \in [L_B]$ such that $(i,j) \in \mathcal{S}$, we must have that $\sum_{q=1}^{|\mathcal{Q}|}P^{A,q}_{i} =  |\{j: (i,j) \in \mathcal{S} \}| = d^A_i$. We ignore the terms where $d^A_i = 0$ since such a vertex would have been pruned from the graph. Similar logic follows for $P^{B,q}_{j}$. Thus, we get the stated recovery threshold. 
		
		Now, note that the change in decoding complexity comes from the fact that every grouping in the task assignment only contains one matrix product and, thus, there is no need to invert upper-triangular matrices.
	\end{proof}
	
	From this recovery threshold, we can state the following:
	\begin{corollary}\label{cor:t1_codes}
		When $p=1$, T1-FCSA are optimal within a multiplicative gap of 2. Specifically,
		\begin{equation*} 
			R^{T1}_{FCSA} \leq 2R^*_{\mathcal{S},m,1,n}.
		\end{equation*}
	\end{corollary}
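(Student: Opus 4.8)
The plan is to combine the closed-form expression for $R^{T1}_{FCSA}$ from the preceding theorem with the cut-set lower bound of Theorem~\ref{theorem:optimality}. First I would specialize to $p=1$ and observe that the relevant bilinear complexity is $R_{m,1,n}=mn$: the product of an $m$-by-$1$ matrix and a $1$-by-$n$ matrix is an outer product, and the naive construction that computes its $mn$ entries directly achieves $R_{m,1,n}=mn$. Since the flexibility parameter $\rho$ must divide $R_{m,1,n}=mn$ and is at least $1$, every admissible $\rho$ satisfies $\rho\le mn$, a bound I would use rather than fixing a particular $\rho$.

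With $p=1$, Eq.~\eqref{eq:corollary_fcsa_t1_r} reads
\[
R^{T1}_{FCSA} = (mn+\rho)\,|\mathcal{S}| \;-\; \min_{i\in[L_A]:\,d^A_i\neq 0} d^A_i \;-\; \min_{j\in[L_B]:\,d^B_j\neq 0} d^B_j \;+\; 1 .
\]
By the pruning assumption in the system model, every surviving left and right vertex has degree at least one, so both minima are at least $1$; combined with $\rho\le mn$ this yields $R^{T1}_{FCSA} \le (mn+mn)\,|\mathcal{S}| - 1 - 1 + 1 = 2mn\,|\mathcal{S}| - 1$.

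Finally I would invoke Theorem~\ref{theorem:optimality} with $p=1$, namely $R^*_{\mathcal{S},m,1,n}\ge mn\,|\mathcal{S}|$, noting that this lower bound is stated for exactly the download cost $\frac{\alpha\gamma}{mn}$ that the T1-FCSA code achieves, so the comparison is legitimate. Chaining the two bounds gives $R^{T1}_{FCSA}\le 2mn\,|\mathcal{S}|-1 < 2mn\,|\mathcal{S}| \le 2R^*_{\mathcal{S},m,1,n}$, which is the claim. The whole argument is essentially bookkeeping once the two ingredients are in hand; the only steps needing a word of justification are the evaluation $R_{m,1,n}=mn$ and the inequality $\rho\le mn$, so I do not expect a real obstacle here — the substantive content lives in Theorems~\ref{theorem:fcsa_codes_flex} and~\ref{theorem:optimality}.
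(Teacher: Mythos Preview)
Your proposal is correct and follows essentially the same route as the paper: bound $R^{T1}_{FCSA}$ above by $2R_{m,1,n}|\mathcal{S}|$ using $\rho\le R_{m,1,n}$ and the fact that the minimum degrees are at least $1$, then combine $R_{m,1,n}\le mn$ (from the naive construction) with the lower bound $R^*_{\mathcal{S},m,1,n}\ge mn|\mathcal{S}|$ of Theorem~\ref{theorem:optimality}. The only cosmetic difference is that you assert $R_{m,1,n}=mn$ whereas the paper only states (and only needs) $R_{m,1,n}\le mn$; your justification via the naive outer-product construction establishes precisely that inequality, which is all the argument requires.
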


\begin{proof}
	First, note that $\rho \leq R_{m,p,n}$. Thus, 
	\begin{equation}
	R^{T1}_{FCSA} = (R_{m,p,n}+\rho)|\mathcal{S}|-\min_{i\in [L_A]: d^A_i\neq 0 }d^A_i - \min_{j\in[L_B]: : d^B_j\neq 0 }d^B_j + 1 \leq 2 R_{m,p,n}|\mathcal{S}|.
	\end{equation}
	
	Due to Theorem \ref{theorem:optimality}, we only have to show that $R_{m,1,n} \leq mn$. This is straightforward since we can simply use the un-coded sub-matrices as an upper bound construction. Thus, $R_{m,1,n} \leq mn$ and the proof is complete. 
\end{proof}
	
	We also note that since T1-FCSA codes require no power assignment optimization, $R^{T1}_{FCSA}$ is always available as a worst-case upper-bound on the achievable recovery threshold among FCSA codes without any extra effort. In the next subsection, we demonstrate a coding scheme that uses power assignment optimization to find a better recovery threshold. Though we wish to remark that while T1-FCSA codes may not have the optimal recovery threshold among the FCSA codes, the reduced decoding complexity can provide an overall better completion time depending on $\mathcal{S},m,p,n,$ and $\rho$. 
	
	\begin{figure}
		\begin{subfigure}{0.5\textwidth}
			\centering
			\begin{tabular}{l |c || c | c | c |}
				\cline{3-5}
				\multicolumn{1}{c}{} & \multicolumn{1}{c||}{}  &	$\mathbf{B}_1$ & $\mathbf{B}_2$ & $\mathbf{B}_3$ \\ \cline{2-5}
				& \diagbox{$P^{A,q}$}{\vspace{-1em}\\ $P^{B,q}$}&	 
				\textcolor{group1}{1},\textcolor{group2}{0},\textcolor{group3}{0},\textcolor{group4}{0}&  
				\textcolor{group1}{0},\textcolor{group2}{1},\textcolor{group3}{1},\textcolor{group4}{0}& 
				\textcolor{group1}{0},\textcolor{group2}{0},\textcolor{group3}{0},\textcolor{group4}{1}\\\hline\hline
				\multicolumn{1}{|c|}{$\mathbf{A}_1$} &
				\textcolor{group1}{1},\textcolor{group2}{1},\textcolor{group3}{0},\textcolor{group4}{0}
				& \tikzmark{top left 1} 1 \tikzmark{bottom right 1} & \tikzmark{top left 2} 1 \tikzmark{bottom right 2} & 0  \\	\hline
				\multicolumn{1}{|c|}{$\mathbf{A}_2$} &
				\textcolor{group1}{0},\textcolor{group2}{0},\textcolor{group3}{1},\textcolor{group4}{1}
				& 0 & \tikzmark{top left 3} 1 \tikzmark{bottom right 3}  & \tikzmark{top left 4} 1 \tikzmark{bottom right 4} \\  \hline
			\end{tabular}
			\DrawBox[ultra thick, group1]{top left 1}{bottom right 1}
			\DrawBox[ultra thick, group2]{top left 2}{bottom right 2}
			\DrawBox[ultra thick, group3]{top left 3}{bottom right 3}
			\DrawBox[ultra thick, group4]{top left 4}{bottom right 4}
			\caption{Example of T1-FCSA Code.}\label{fig:t1_ex}
		\end{subfigure}
		\begin{subfigure}{0.5\textwidth}
			\centering
			\begin{tabular}{l |c || c | c | c |}
			\cline{3-5}
			\multicolumn{1}{c}{} & \multicolumn{1}{c||}{}  &	$\mathbf{B}_1$ & $\mathbf{B}_2$ & $\mathbf{B}_3$ \\ \cline{2-5}
			& \diagbox{$P^{A,q}$}{\vspace{-1em}\\ $P^{B,q}$}&	 
			\textcolor{group3}{2},\textcolor{group2}{0}&  
			\textcolor{group3}{1},\textcolor{group2}{1}& 
			\textcolor{group3}{0},\textcolor{group2}{2}\\\hline\hline
			\multicolumn{1}{|c|}{$\mathbf{A}_1$} & \textcolor{group3}{2},\textcolor{group2}{0} & \tikzmark{top left 1}1 & 1\tikzmark{bottom right 1} & 0  \\	\hline
			\multicolumn{1}{|c|}{$\mathbf{A}_2$} & \textcolor{group3}{0},\textcolor{group2}{2} & 0 & \tikzmark{top left 2}1 & 1\tikzmark{bottom right 2} \\  \hline
		\end{tabular}
		\DrawBox[ultra thick, group3]{top left 1}{bottom right 1}
		\DrawBox[ultra thick, group2]{top left 2}{bottom right 2}
			\caption{Example of T2-FCSA Code. }\label{fig:t2_ex}
		\end{subfigure}
	\caption{Examples of special cases of FCSA codes. The example of T2-FCSA code is also the motivating example in Section \ref{subsec:motivating_ex}.}
	\end{figure}

	\subsection{Type-2 FCSA codes}
	The second case improves on the recovery threshold of T1-FCSA codes by adding only a little complexity in optimizing the power assignment. We term this special case as Type-2 FCSA (T2-FCSA) codes. Let $|\mathcal{Q}| = L_A$. For $i \in [L_A]$, let $\mathcal{L}^i_A = \{i\}$ and $\mathcal{L}^i_B = \{j:(i,j) \in \mathcal{S}\}$ which guarantees that the power assignment has the property that $P^{A,q}_{i} = d^A_{i}$. Thus, only $P^{B,q}$ has to be optimized.  Note that the choice to partition based on $\mathcal{A}$ is arbitrary and the same task assignment can be done for $\mathcal{B}$. An example of T2-FCSA codes is provided in Fig. \ref{fig:t2_ex}. By Theorem \ref{theorem:fcsa_codes_flex}, we have the following:
	\begin{corollary}
		(Type-2 FCSA+FCC codes)
		
		For T2-FCSA codes and parameters $m,p,n,$ and $\rho$, we have
		\begin{align}
		&R^{T2}_{FCSA} = (R_{m,p,n}+\rho)|\mathcal{S}|-\min_{i\in [L_A]: d^A_i\neq 0 }d^A_i - \min_{j\in[L_B]}\left(\sum_{q=1}^{|\mathcal{Q}|}P^{B,q}_{j} \right)  + 1,\label{eq:corollary_fcsa_t2_r}\\
		&C^{T2}_d  =  \mathcal{O}(\alpha\gamma R|\mathcal{S}|)+
		\widetilde{\mathcal{O}}(\alpha\gamma(R^{T2}_{FCSA}\log^2(R^{T2}_{FCSA})+R_{m,p,n}|\mathcal{S}|^2)).
		\end{align}
	\end{corollary}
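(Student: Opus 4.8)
The plan is to specialize Theorem~\ref{theorem:fcsa_codes_flex} to the task assignment $\mathcal{Q}=\{(\{i\},\{j:(i,j)\in\mathcal{S}\})\}_{i=1}^{L_A}$ and then simplify the resulting expressions. First I would confirm that $\mathcal{Q}$ satisfies Definition~\ref{def:task}: the Cartesian products $\{i\}\times\mathcal{L}^i_B$ are pairwise disjoint because their left coordinates differ, and each $s=(i,j)\in\mathcal{S}$ lies in exactly the group indexed by $q=i$. Hence $L^q_A=1$, $L^q_B=d^A_q$, and $\sum_{q=1}^{|\mathcal{Q}|}L^q_AL^q_B=\sum_{i=1}^{L_A}d^A_i=|\mathcal{S}|$.

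Next I would check that the accompanying power assignment with $P^{A,q}_i=d^A_i$ is admissible. Because $L^q_A=1$, the only choice allowed by item~\ref{def:power_con_2}.\ref{con:power_a} of Definition~\ref{def:power} is $x=1$, forcing $P^{A,q}_i=L^q_AL^q_B=d^A_q$ for $i\in\mathcal{L}^q_A$ and $P^{B,q}_j=yL^q_A=y$ for some $y\in[d^A_q]$; the distinctness condition~\ref{def:power_con_3} then holds automatically since $P^{A,q}$ has a single nonzero entry and the nonzero entries of $P^{B,q}$ range over a permutation of $[d^A_q]$. The remaining freedom---which permutation to use---is exactly what is optimized, which is why the term $\min_{j\in[L_B]}\big(\sum_q P^{B,q}_j\big)$ survives unsimplified in the statement.

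Substituting into the recovery-threshold formula of Theorem~\ref{theorem:fcsa_codes_flex}: by constraint~\ref{def:power_con_1} of Definition~\ref{def:power}, $P^{A,q}_i=0$ whenever $q\neq i$, so $\sum_{q=1}^{|\mathcal{Q}|}P^{A,q}_i=d^A_i$, and since the system model forces every $\mathbf{A}_i$ to appear in $\mathcal{S}$ all $d^A_i\geq1$, so $\min_{i\in[L_A]}\sum_q P^{A,q}_i=\min_{i\in[L_A]:\,d^A_i\neq0}d^A_i$. Combining with $\sum_q L^q_AL^q_B=|\mathcal{S}|$ and $R=R_{m,p,n}$ yields Eq.~\eqref{eq:corollary_fcsa_t2_r}. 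For the decoding cost I would substitute the same quantities into the $C_d$ expression of Theorem~\ref{theorem:fcsa_codes_flex}, using that $R^{m,p,n}_{FCSA}$ here equals $R^{T2}_{FCSA}$, and bound $\sum_{q=1}^{|\mathcal{Q}|}(L^q_AL^q_B)^2=\sum_{i=1}^{L_A}(d^A_i)^2\leq\big(\sum_i d^A_i\big)^2=|\mathcal{S}|^2$ together with $\tfrac{1}{mn}\leq1$, which gives the stated $C^{T2}_d$.

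This is a direct specialization of an already-proved theorem rather than a new argument, so I do not expect a genuine obstacle; the only points requiring care are the harmless bookkeeping around degree-zero vertices (ruled out by the standing pruning assumption) and the observation that, unlike T1-FCSA codes, each task group here may contain several matrix products, so the upper-triangular inversions of Eq.~\eqref{eq:upper_triangular} are still performed---this is precisely the source of the extra $R_{m,p,n}|\mathcal{S}|^2$ term in $C^{T2}_d$ that does not appear for T1-FCSA.
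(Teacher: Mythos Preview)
Your proposal is correct and follows essentially the same approach as the paper: both proofs simply plug the T2 task assignment into Theorem~\ref{theorem:fcsa_codes_flex}, using $\sum_q L^q_AL^q_B=\sum_i d^A_i=|\mathcal{S}|$ and $\sum_q P^{A,q}_i=d^A_i$ for the recovery threshold, and the bound $\sum_q(L^q_AL^q_B)^2=\sum_i(d^A_i)^2\le|\mathcal{S}|^2$ for the decoding complexity. Your additional verification that $\mathcal{Q}$ and $\mathcal{P}$ satisfy Definitions~\ref{def:task} and~\ref{def:power} is more explicit than the paper's terse treatment but adds no new ideas.
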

\begin{proof}
	The recovery threshold is straightforward to acquire by plugging in the values for $P^{A,q}_{i}$. The change in decoding complexity arises from the fact that for T2-FCSA codes the decoding complexity of inverting the triangular system of linear equations becomes $\sum_{q=1}^{|\mathcal{Q}|}(L^q_AL^q_B)^2 = \sum_{i=1}^{L_A}(d^A_i)^2 \leq |\mathcal{S}|^2$. 
\end{proof}

	It is straightforward to see that $R^{T1}_{FCSA} \geq R^{T2}_{FCSA}$ since $P^{B,q}_j \geq 1$ if $(i,j) \in \mathcal{S}$ which implies that $\min_{j\in[L_B]}\sum_{q=1}^{|\mathcal{Q}|}P^{B,q}_{j} \geq \min_{j\in[L_B]}d^B_j$. Thus, by Corollary \ref{cor:t1_codes}, we get the following:
	\begin{corollary}\label{cor:t2_codes}
		When $p=1$, T2-FCSA are optimal within a multiplicative gap of 2. Specifically,
		\begin{equation*} 
		R^{T2}_{FCSA} \leq 2R^*_{\mathcal{S},m,1,n}.
		\end{equation*}
	\end{corollary}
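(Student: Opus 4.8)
The plan is to deduce the bound directly from two facts already established in the excerpt: the pointwise comparison $R^{T2}_{FCSA} \le R^{T1}_{FCSA}$ argued immediately before the corollary, and the near-optimality of T1-FCSA codes from Corollary \ref{cor:t1_codes}. Since both statements are taken with the same parameters $m,p,n,\rho$ and under the same restriction $p=1$, chaining them gives $R^{T2}_{FCSA} \le R^{T1}_{FCSA} \le 2R^*_{\mathcal{S},m,1,n}$, which is exactly the claim. So the core of the write-up is just invoking these two results in sequence and checking that the hypotheses line up.

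For completeness I would also spell out a self-contained version that bypasses Corollary \ref{cor:t1_codes}. Starting from the expression for $R^{T2}_{FCSA}$ in Eq. \eqref{eq:corollary_fcsa_t2_r}, every left vertex retained in the graph has $d^A_i \ge 1$, so $\min_{i:\,d^A_i\neq 0} d^A_i \ge 1$; likewise, since $P^{B,q}_j \ge 1$ whenever $j$ belongs to the (unique) group $q$ and $d^B_j \ge 1$, we get $\min_{j\in[L_B]}\sum_{q=1}^{|\mathcal{Q}|} P^{B,q}_j \ge 1$. Hence the tail $-\min_i d^A_i - \min_j \sum_q P^{B,q}_j + 1 \le 0$, and using $\rho \le R_{m,p,n}$ we obtain $R^{T2}_{FCSA} \le (R_{m,1,n}+\rho)|\mathcal{S}| \le 2R_{m,1,n}|\mathcal{S}|$. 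The uncoded submatrix scheme gives the crude bound $R_{m,1,n} \le mn$, and Theorem \ref{theorem:optimality} gives $R^*_{\mathcal{S},m,1,n} \ge mn|\mathcal{S}|$, so $R^{T2}_{FCSA} \le 2mn|\mathcal{S}| \le 2R^*_{\mathcal{S},m,1,n}$.

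I do not expect any substantive obstacle: this is a corollary whose proof is a two-line chain of inequalities that are all already available. The only care needed is bookkeeping — handling the $+\rho$ term exactly as in Corollary \ref{cor:t1_codes}, keeping the $p=1$ restriction throughout so that $R_{m,1,n} \le mn$ is the relevant uncoded bound, and making sure the degree and power lower bounds are applied only to vertices that survive the pruning step, so that the minima in Eq. \eqref{eq:corollary_fcsa_t2_r} are over nonempty index sets.
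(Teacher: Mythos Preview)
Your proposal is correct and mirrors the paper's own argument: the paper proves this corollary in one line by noting $R^{T2}_{FCSA}\le R^{T1}_{FCSA}$ (via $\min_{j}\sum_q P^{B,q}_j\ge \min_j d^B_j$) and then invoking Corollary~\ref{cor:t1_codes}. Your optional self-contained version is a harmless unpacking of the same inequalities and does not diverge from the paper's approach.
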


	We note that while these corollaries imply a bound on the achievable recovery thresholds of FCSA codes, these bounds are rarely tight. In the next section, we will provide numerical simulations to demonstrate the average multiplicative gap to optimality and display that FCSA codes perform much better in terms of the recovery threshold. 
	
	\begin{remark}
		As mentioned, Corollaries \ref{cor:t1_codes} and \ref{cor:t2_codes} are not necessarily tight for FCSA codes. This is especially true for FCSA+FCC when $\rho < R_{m,p,n}$. Consider the case when $\phi = \frac{R_{m,p,n}}{r} $ where $\phi > 1$. This case results in an increase in upload cost and worker complexity by a factor of $\phi$. Yet, the download cost is still the same and we can invoke Theorem \ref{theorem:optimality}. Hence, the multiplicative factor of optimality for $p=1$ can be upper bounded by $\frac{R^{T2}_{FCSA}}{R^*_{\mathcal{S},m,1,n}} \leq \frac{R_{m,1,n}+\rho}{R_{m,1,n}} =1 + \frac{1}{\phi}$. As such, any optimality gap can be satisfied for the appropriate increase in upload cost and worker complexity. Note that this upper bound does not take into account the usage of task and power assignment. As will be shown shortly, our new constructs can result in even further improvements in the recovery threshold. To avoid the issue of choosing values for $\rho$, we will concern ourselves with the case of $m=p=n=1$ and demonstrate how our code construction can improve upon the multiplicative optimality gap of 2.
	\end{remark}

	\section{Numerical Analysis of T1 and T2 FCSA codes}\label{sec:sim}

	In this section, we provide numerical analysis of the average multiplicative factor of optimality for T1-FCSA and T2-FCSA codes. The average is performed over the VCDBMM ensembles defined in Section \ref{sec:prelim}. We focus on the case when $m=p=n=1$ since that is the region where we can show the most benefit provided by FCSA codes. The metric that we will be focusing on is the ratio between the average recovery threshold of FCSA codes and the lower bound provided in Section \ref{sec:lower_bound}. Specifically, for the $V_{\lambda}(L_A,L_B)$ ensemble we analyze $G_{L_A,L_B,\lambda} = \frac{\mathbb{E}[R_{FCSA}]}{\mathbb{E}[|\mathcal{S}|]}= \frac{\mathbb{E}[R_{FCSA}]}{L_AL_B\lambda}$ and for the $V_{k}(L_A,L_B)$ ensemble we analyze $G_{L_A,L_B,k} = \frac{\mathbb{E}[R_{FCSA}]}{\mathbb{E}[|\mathcal{S}|]}= \frac{\mathbb{E}[R_{FCSA}]}{L_A\frac{1+k}{2}}$.
	
	For the $V_{\lambda}(L_A,L_B)$ and $V_{k}(L_A,L_B)$ ensembles, we shall focus on $\lambda$ and $k$ values that provide low to medium density within the bipartite graph which is the region of interest for our work. We note that in the complementary region that $2|\mathcal{S}|-1 \geq L_AL_B$ and, thus, FCSA codes are comparable to Polynomial codes as discussed in Remark \ref{remark:poly}. Additionally, we provide a baseline recovery threshold to compare against. We will be comparing to the LCC and CSA schemes discussed in Section \ref{subsec:relevant_cons}. We focus on these schemes since the values of $k$ and $\lambda$ we will analyze will highly likely satisfy $2|\mathcal{S}|-1 < L_AL_B$ and, thus, LCC and CSA codes will have a better recovery threshold than Polynomial codes. As such, for the $V_{\lambda}(L_A,L_B)$ ensemble we have $\frac{\mathbb{E}[R_{Baseline}]}{\mathbb{E}[|\mathcal{S}|]} = \frac{2\mathbb{E}[|\mathcal{S}|]-1}{\mathbb{E}[|\mathcal{S}|]} = 2 - \frac{1}{\mathbb{E}[|\mathcal{S}|]} = 2- \frac{1}{L_AL_B\lambda} $ and for the $V_{k}(L_A,L_B)$ ensemble we have $\frac{\mathbb{E}[R_{Baseline}]}{\mathbb{E}[|\mathcal{S}|]} = 2- \frac{1}{L_A\frac{1+k}{2}}$. Finally, when performing the simulations for T2-FCSA codes, we do power assignment optimization for both $\mathcal{A}$ and $\mathcal{B}$ partitioning and take the best recovery threshold. 	
	\begin{figure}[t]
		\centering
		\includegraphics[width=0.85\linewidth]{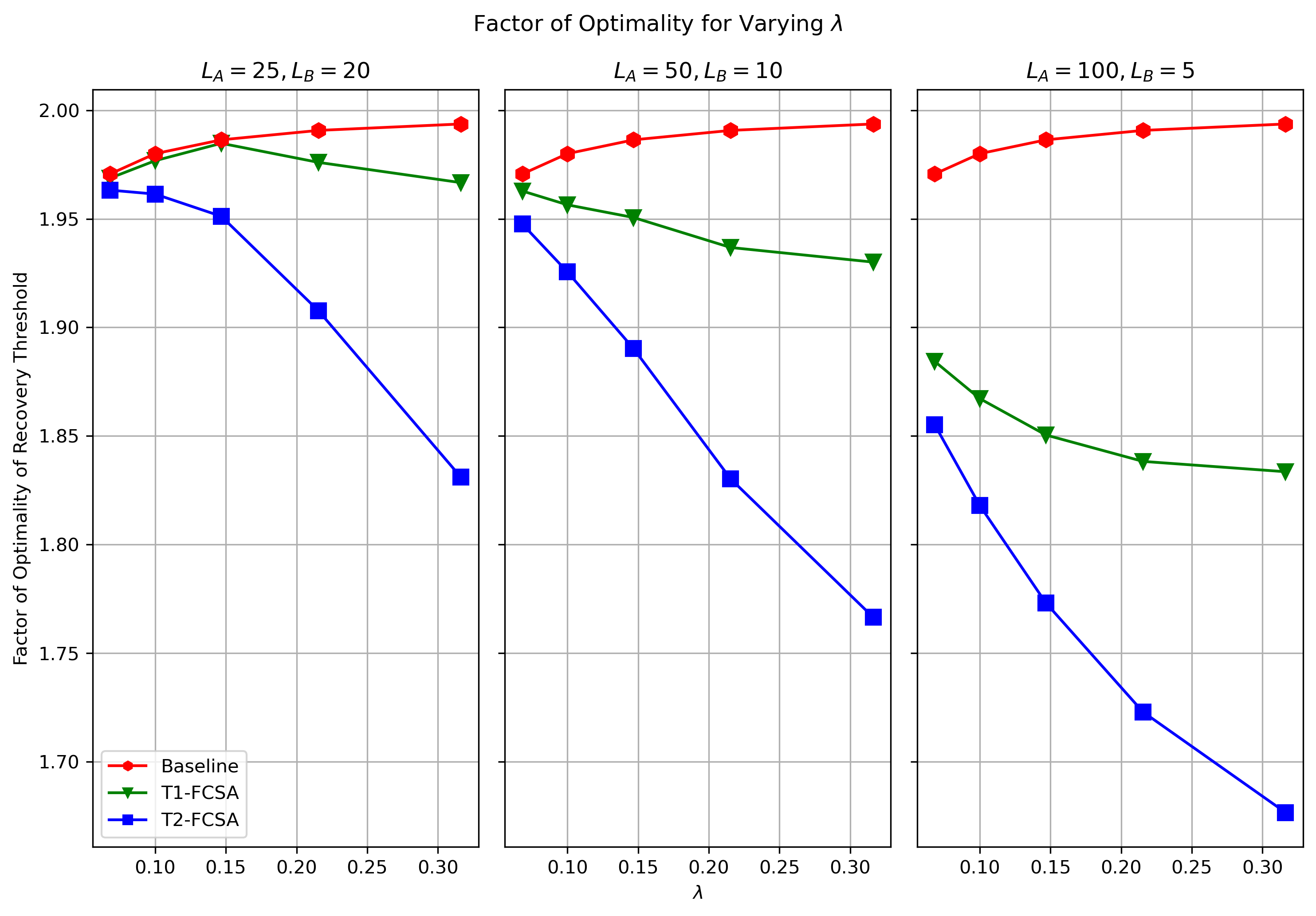}
		\vspace{-0.2cm}
		\caption{$G_{L_A,L_B,\lambda}$ for the $V_{\lambda}(L_A,L_B)$ ensemble with different $L_A$ and $L_B$ and varying values of $\lambda$.}
		\label{fig:random_plot_lamb}
		\vspace{-0.5cm}
	\end{figure}
	
	\begin{figure}[t]
		\centering
		\includegraphics[width=0.85\linewidth]{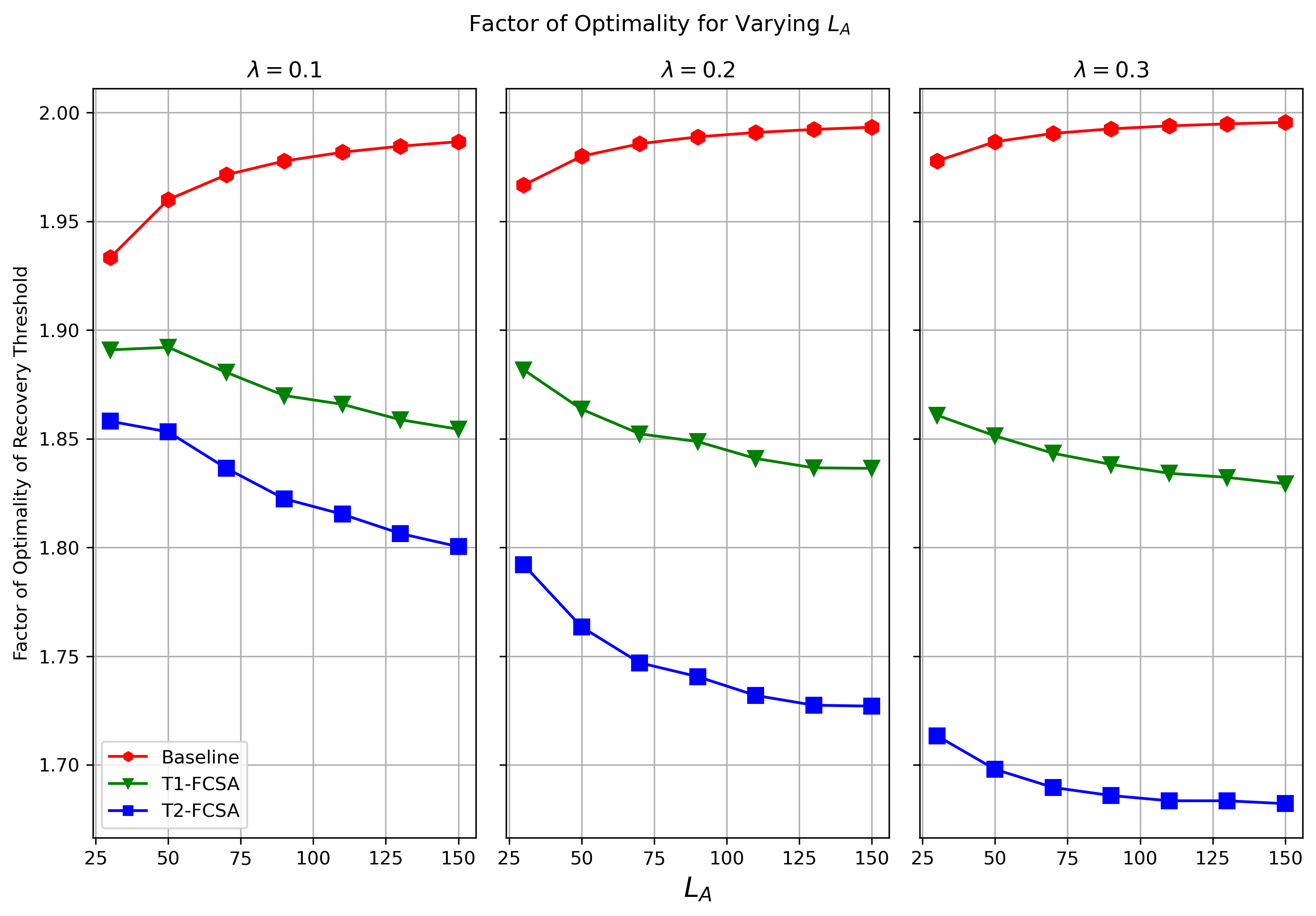}
		\vspace{-0.2cm}
		\caption{$G_{L_A,L_B,\lambda}$ for the $V_{\lambda}(L_A,L_B)$ ensemble with fixed $L_B=5$ and different $\lambda$ for varying values of $L_A$.}
		\label{fig:random_plot_la}
		\vspace{-0.5cm}
	\end{figure}

	\begin{figure}[t]
		\centering
		\includegraphics[width=0.85\linewidth]{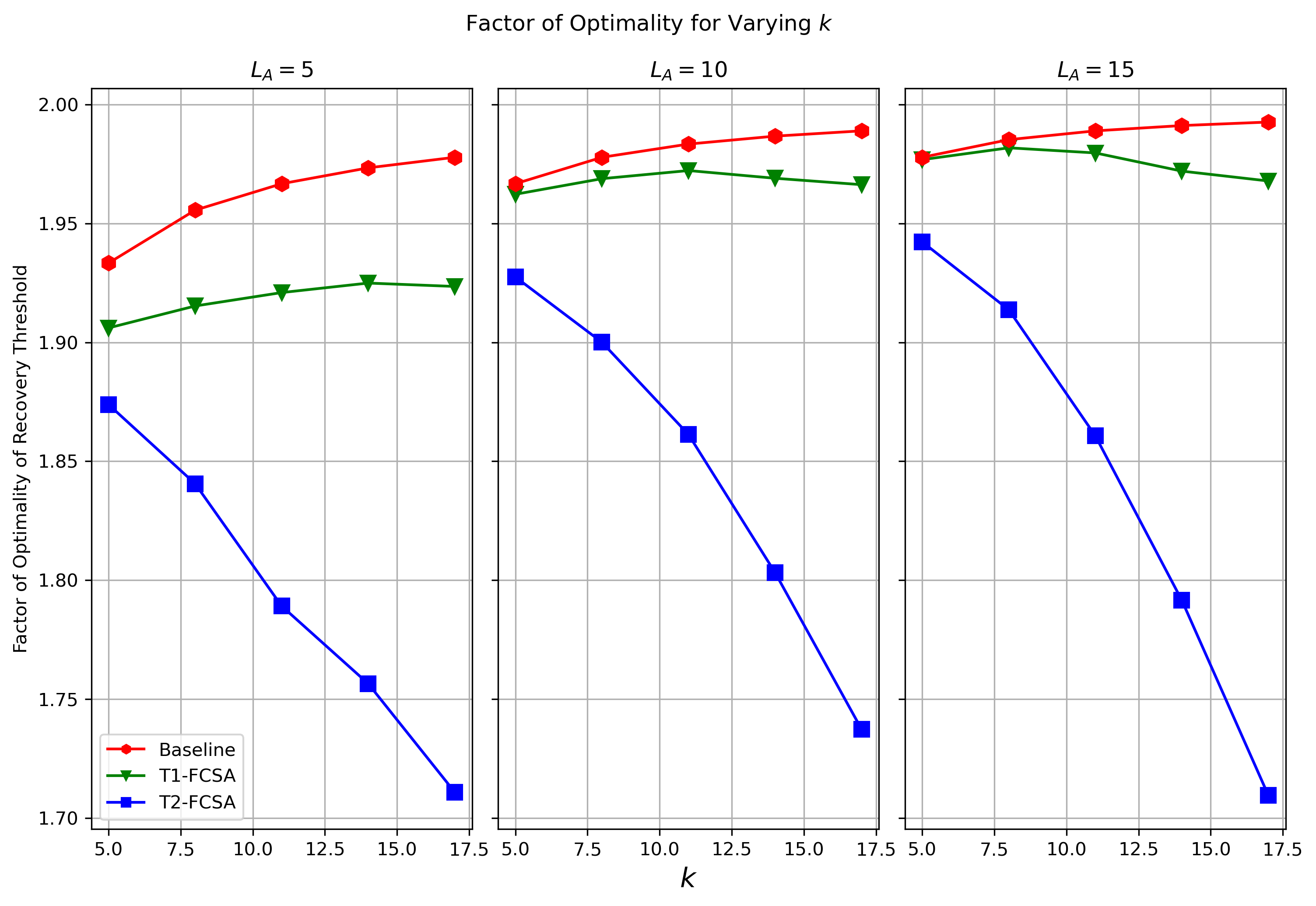}
		\vspace{-0.2cm}
		\caption{$G_{L_A,L_B,k}$ for the $V_k(L_A,L_B)$ ensemble with fixed $L_B=5$ and different $L_A$ for varying values of $k$.}
		\label{fig:random_plot_k}
		\vspace{-0.5cm}
	\end{figure}
	
	Fig. \ref{fig:random_plot_lamb} demonstrates the empirical simulation of $G_{L_A,L_B,\lambda}$ when $L_A$ and $L_B$ are fixed and we vary $\lambda$. The first observation we make is that T1-FCSA and T2-FCSA codes generally outperform the baseline recovery threshold. Additionally, T2-FCSA codes present significant improvement over T1-FCSA codes, with upto a difference of $0.15$ in terms of $G_{L_A,L_B,\lambda}$. This confirms that power assignment optimization can dramatically reduce the recovery threshold. Another observation is that $G_{L_A,L_B,\lambda}$ strongly depends on the values of $L_A$ and $L_B$ despite  $\mathbb{E}[|\mathcal{S}|]$ being exactly the same for all configurations. This suggests that there may exist a tighter bound than $|\mathcal{S}|$ that depends on the configuration of the computation edges. Finally, we observe that the curves for T2-FCSA codes follow almost a linear curve in terms of $\lambda$, indicating a possible scaling law to predict the average recovery threshold.
	
 	Fig. \ref{fig:random_plot_la} demonstrates the empirical simulation of $G_{L_A,L_B,\lambda}$ when $L_B=5$, $\lambda$ is fixed, and we vary $L_A$. This experiment was performed to further study the effect that the configuration of $L_A$ and $L_B$ have on the recovery threshold. Similar to the previous experiment, we see T1-FCSA and T2-FCSA codes outperform the baseline recovery threshold and T2-FCSA codes are significantly better than T1-FCSA codes. Interestingly, we see that this configuration creates a large gap between the baseline recovery threshold and the FCSA code recovery threshold. Thus, FCSA codes prove to be very useful when either $L_A$ or $L_B$ is kept small while the other can grow large. 
	
 	Fig. \ref{fig:random_plot_k} demonstrates the empirical simulation of $G_{L_A,L_B,k}$ when $L_A$ and $L_B$ are fixed and we vary $k$. Again, we see T1-FCSA and T2-FCSA codes outperform the baseline. Interestingly, we see that for T1-FCSA codes there is an increase in the value of $G_{L_A,L_B,k}$ as the density increases which is the opposite of the trend seen in the previous simulations. On the other hand, T2-FCSA codes improve rapidly as the value of $k$ increases. This further indicates a possible tighter bound that takes into account the structure of $V_k(L_A,L_B)$.
 	
 	From all these simulations, we can see that T1-FCSA and T2-FCSA on average have a multiplicative optimality gap that is generally much smaller than implied by Corollaries \ref{cor:t1_codes} and \ref{cor:t2_codes}. Additionally, both coding schemes outperform the relevant coding schemes discussed in Section \ref{subsec:relevant_cons} in terms of the recovery threshold.
	\section{Conclusion}
	In this paper, we presented the novel problem of Variable Coded Distributed Batch Matrix Multiplication and provided Flexible Cross-Subspace Alignment codes as novel solution with flexible parameters allowing for a variety of system complexities and near optimal straggler resilience. We provided a fundamental limit on the recovery threshold for VCDBMM and determined that for special cases the multiplicative optimaltiy gap for FCSA codes is a constant factor of $2$. Finally, we performed simulations on ensembles of the VCDBMM problem to demonstrate how FCSA codes generally provide much better recovery thresholds then our bounds imply and outperform other coding schemes that could be naively applied to solve VCDBMM.
	
	Despite the impressive performance of FCSA codes, there is still many avenues for possible future work. One major endeavor is on reducing the gap in optimality by determining a tighter lower bound as hypothesized in Section \ref{sec:sim}. Our current research direction is generalizing FCSA codes to other computation tasks besides matrix multiplication such as variable dot products. 
	\vspace{-0.5em}
	\section{Acknowledgments}
	Research supported in part by a grant from ASRC-IDEMA.

	
	\bibliographystyle{ieeetr}
	\bibliography{references_long}
	
	\appendix 
	
	\subsection{Proof of Theorem \ref{theorem:optimality}} \label{app:optimality_bound}

	We note that the problem definition specifies that the computation strategy has to work for all $\mathcal{A}$ and $\mathcal{B}$. We can thus prove the lower bound by constraining $\mathcal{A}$ and $\mathcal{B}$ to a certain class of matrices and providing a lower bound for this class which will in turn be a lower bound for the general case. Thus, assume that elements of $\mathcal{A}$ and $\mathcal{B}$ are from a finite field of size $u$, i.e., $\mathbb{F}_u$. Consider a matrix $\widehat{\mathbf{B}} \in  \mathbb{F}^{\beta\times L_B\gamma}_u$ that is constructed by horizontally concatenating all the matrices in $\mathcal{B}$, i.e., $\widehat{\mathbf{B}} = 
	\begin{bmatrix}
	\mathbf{B}_1 &  \cdots & \mathbf{B}_{L_B}
	\end{bmatrix}$. We assume that the matrices in $\mathcal{B}$ are chosen such that $\widehat{\mathbf{B}}$ is a tall matrix (i.e., $\beta > L_B\gamma$) and that $\widehat{\mathbf{B}}$ is full rank. Let $\mathcal{B}$ be fixed. Also, we define $C_{\mathcal{S}} = \{\mathbf{A}_i\mathbf{B}_j: (i,j) \in \mathcal{S}\}$.
	
	Let all the matrices in $\mathcal{A}$ be uniformly sampled from $\mathbb{F}^{\alpha\times\beta}_u$. As such, we can treat the symbols in $\mathbf{A}_i\mathbf{B}_j$ as random variables that are uniformly distributed on $\mathbb{F}^{\alpha\times\gamma}_u$. For a fixed $i \in [L_A]$, the computation $\mathbf{A}_i\mathbf{B}_j$ is a sub-matrix of $\mathbf{A}_i\widehat{\mathbf{B}}$. By the full rank property of $\widehat{\mathbf{B}}$, $\mathbf{A}_i\mathbf{B}_j$ are independent random variables for all $i\in [L_A]$ and $j \in [L_B]$. Therefore, the entropy of the desired computations is $H(C_{\mathcal{S}}) = |\mathcal{S}|\alpha\gamma \log_2{u}$. 
	We define $C_{\mathcal{R}} = \{\widebf{C}_i:i\in \mathcal{R}\}$ to be the set of outputs from the workers belonging to the index set $\mathcal{R} \subset [K]$. Assume that $\mathcal{R}$ is chosen such that the desired computations $C_{\mathcal{S}}$ can be decoded from $C_{\mathcal{R}}$ and that $|\mathcal{R}| =R^*_{\mathcal{S},m,p,n}$. This implies that $H(C_{\mathcal{S}}| C_{\mathcal{R}}) = 0$. Due to the fixed download cost of $\frac{\alpha\gamma}{mn}$, we know that the number of symbols in $C_{\mathcal{R}}$ is $R^*_{\mathcal{S},m,p,n}\frac{\alpha\gamma}{mn}$.
	We thus get the following:
	\begin{align}
	|\mathcal{S}|\alpha\gamma \log_2{u} = H(C_{\mathcal{S}}) = H(C_{\mathcal{S}}) - H(C_{\mathcal{S}}|C_{\mathcal{R}}) &= I(C_{\mathcal{S}};C_{\mathcal{R}}) \nonumber \\
	= H(C_{\mathcal{R}}) - H(C_{\mathcal{R}}|C_{\mathcal{S}})\leq   H(C_{\mathcal{R}})
	&\leq R^*_{\mathcal{S},m,p,n}\frac{\alpha\gamma}{mn} \log_2{u}
	\end{align}
	where $I(X;Y)$ is the mutual information and the last inequality comes from the sub-additivity of joint entropy and the fact that the uniform discrete distribution maximizes the entropy. As such, we get $mn|\mathcal{S}| \leq R^*_{\mathcal{S},m,p,n}$.

	\subsection{Power Assignment Optimization}  \label{app:ilp}
	
	Using Definition \ref{def:power} and Lemma \ref{lemma:power}, we note that the optimization problem of minimizing $R_{FCSA}$ can be written as the following optimization problem
	\begin{argmaxi!}|l|[3]
		{ P^{A,q}_i,P^{B,q}_j,a^q_i,b^q_j,t_q}{\min_{i\in [L_A]}\sum_{q=1}^{|\mathcal{Q}|}P^{A,q}_{i}
			+ \min_{j\in[L_B]}\sum_{q=1}^{|\mathcal{Q}|}P^{B,q}_{j}}{}{}
		\addConstraint{P^{A,q}_{i}=P^{B,q}_{j}}{= 0 \text{ if }  i \notin \mathcal{L}^q_A, j \notin \mathcal{L}^q_B, \quad \forall q \in [|\mathcal{Q}|] }
		\addConstraint{P^{A,q}_{i}}{= (t_q)(L^q_AL^q_B-a^q_i+1) + (1-t_q)(a^q_iL^q_B), \quad i\in \mathcal{L}^q_A ,q\in [|\mathcal{Q}|],}
		\addConstraint{P^{B,q}_{j}}{= (1-t_q)(L^q_AL^q_B-b^q_j+1) + (t_q)(b^q_jL^q_A), \quad j\in \mathcal{L}^q_B, q\in [|\mathcal{Q}|],}
		\addConstraint{P^{A,q}_{i} \neq P^{A,q}_{k} ,}{\forall i \neq k \in \mathcal{L}^q_A \quad \forall q\in[|\mathcal{Q}|]}
		\addConstraint{P^{B,q}_{j} \neq P^{B,q}_{l} ,}{\forall j \neq l \in \mathcal{L}^q_B \quad \forall q\in[|\mathcal{Q}|]}
		\addConstraint{a^q_i\in [L^q_A] , b^q_j }{\in [L^q_B]  \quad , (i,j) \in \mathcal{L}_A\times \mathcal{L}_B,q\in [|\mathcal{Q}|] ,}
		\addConstraint{t_{q}}{\in \{0,1\} \quad \forall q\in [|\mathcal{Q}|] ,}
	\end{argmaxi!}
	where $t_q$ is a binary indicator which indicates which condition is active in the second bullet point of Definition \ref{def:power} with $t_q=1$ indicating that $(i)$ is active.
	
	Using standard re-formulation techniques, the above problem can be transformed into the following optimization problem:

\begin{argmaxi!}|l|[3]
	{ P^{A,q}_i,P^{B,q}_j,a^q_i,b^q_j,t_q,v^q_{i,k},x^q_{j,k},y,z}{y+z}{}{}
	\addConstraint{y \leq \sum_{q=1}^{|\mathcal{Q}|}P^{A,q}_{i},}{\quad i \in [L_A]}{}
	\addConstraint{z \leq \sum_{q=1}^{|\mathcal{Q}|}P^{B,q}_{j},}{\quad j \in [L_B]}{}
	\addConstraint{P^{A,q}_{i}=P^{B,q}_{j}}{= 0 \text{ if }  i \notin \mathcal{L}^q_A, j \notin \mathcal{L}^q_B, q \in [|\mathcal{Q}|] }
	\addConstraint{P^{A,q}_{i}}{= (t_q)(L^q_AL^q_B-a^q_i+1) + (1-t_q)(a^q_iL^q_B), \quad i\in \mathcal{L}^q_A ,q\in [|\mathcal{Q}|], \label{eq:optidef_pa}}
	\addConstraint{P^{B,q}_{j}}{= (1-t_q)(L^q_AL^q_B-b^q_j+1) + (t_q)(b^q_jL^q_A), \quad j\in \mathcal{L}^q_B, q\in [|\mathcal{Q}|],\label{eq:optidef_pb}}
	\addConstraint{a^q_{i}= \sum_{k=1}^{L^q_A}k\cdot v^q_{i,k},}{\quad q\in [|\mathcal{Q}|],  i \in \mathcal{L}^q_A }{}
	\addConstraint{b^q_{j}= \sum_{k=1}^{L^q_B}k\cdot x^q_{j,k},}{ \quad  q\in [|\mathcal{Q}|], j \in \mathcal{L}^q_B }{}
	\addConstraint{1= \sum_{k=1}^{L^q_A}v^q_{i,k},}{\quad  q\in [|\mathcal{Q}|],  i \in \mathcal{L}^q_A  \label{eq:optidef_reform_uniq_1}}{}
	\addConstraint{1= \sum_{i \in \mathcal{L}^q_A}v^q_{i,k},}{\quad q\in [|\mathcal{Q}|], k \in [L^q_A] \label{eq:optidef_reform_uniq_2}}{}
	\addConstraint{1= \sum_{k=1}^{L^q_B}x^q_{j,k},}{\quad  q\in [|\mathcal{Q}|], j \in \mathcal{L}^q_B \label{eq:optidef_reform_uniq_3} }{}
	\addConstraint{1= \sum_{j \in \mathcal{L}^q_B}x^q_{j,k},}{\quad  q\in [|\mathcal{Q}|],k \in [L^q_B] \label{eq:optidef_reform_uniq_4}}{}
	\addConstraint{t_{q}}{\in \{0,1\} \quad q\in [|\mathcal{Q}|] ,}
	\addConstraint{v^q_{i,k}}{\in \{0,1\} \quad q\in [|\mathcal{Q}|] , i\in \mathcal{L}^q_A, k \in [L^q_A]}
	\addConstraint{x^q_{j,k}}{\in \{0,1\} \quad q\in [|\mathcal{Q}|] ,j\in \mathcal{L}^q_B, k \in [L^q_B]}
\end{argmaxi!}
	Note that the only free variables are $t_{q},v^q_{i,k},x^q_{j,l},y,z$ of which $t_{q},v^q_{i,k},x^q_{j,l}$ are binary and $y,z$ can be treated as real values. The way we got this formulation is using the permutation polytope to use binary variables to select values for $a^q_i$ and $b^q_j$. This allows for better branching since we can add a constraint to avoid branches where $a^q_i = a^q_k$ for $i\neq k$. 
	
	We note that this optimization problem is very close to a binary linear program except for Eqs. \eqref{eq:optidef_pa} and \eqref{eq:optidef_pb}. Fortunately, we can reformulate these equations further to create only linear constraints. Consider Eq. \eqref{eq:optidef_pa} and observe that $0 \leq P^{A,q}_i \leq L^q_AL^q_B$ for all possible configurations. We can thus  re-write the quadratic equality in Eq. \ref{eq:optidef_pa} with the following linear inequalities:
	\begin{align}
	P^{A,q}_i &\leq L^q_AL^q_B - a^q_i +1 +(1-t_q)(L^q_AL^q_B) \label{eq:lin_1}\\
	P^{A,q}_i &\leq a^q_iL^q_A+(t_q)(L^q_AL^q_B) \label{eq:lin_2}\\
	P^{A,q}_i &\geq L^q_AL^q_B - a^q_i +1 -(1-t_q)(L^q_AL^q_B) \label{eq:lin_3}\\
	P^{A,q}_i &\geq a^q_iL^q_A-(t_q)(L^q_AL^q_B) \label{eq:lin_4}
	\end{align}
	Note that when $t_q=1$ then Eqs. \eqref{eq:lin_2}\eqref{eq:lin_4} become superfluous constraints and Eqs. \eqref{eq:lin_1}\eqref{eq:lin_3} become 
	\begin{align}
	P^{A,q}_i &\leq L^q_AL^q_B - a^q_i +1 \\
	P^{A,q}_i &\geq L^q_AL^q_B - a^q_i +1 
	\end{align}
	which implies that $P^{A,q}_i =  L^q_AL^q_B - a^q_i +1$. A similar line of reasoning follows when $t_q=0$. We can also apply this for Eq. \eqref{eq:optidef_pb}. Thus, the power optimization can be written as a binary linear program for which there are many efficient solvers \cite{gurobi}.

	Other re-formulations are possible but we found this one to be preferable due to only binary free variables and that the permutation constraints Eqs. \eqref{eq:optidef_reform_uniq_1}\eqref{eq:optidef_reform_uniq_2}\eqref{eq:optidef_reform_uniq_3}\eqref{eq:optidef_reform_uniq_4} allow for a better branching method using Special Ordered Sets (SOS) thereby reducing optimization complexity.

\end{document}